\title[Long time quantum--classical correspondence]{Long time quantum--classical correspondence \\ for open systems in trace norm}
\author{Zhenhao Li}
\date{}
\begin{document}

\begin{abstract}
    We consider a frictionless system coupled to an external Markovian environment. The quantum and classical evolution of such systems are described by the Lindblad and the Fokker--Planck equation respectively. We show that when such a system is given by an at most quadratically growing Hamiltonian and at most linearly growing real jump functions, the quantum and classical evolutions remain close on time scales much longer than Ehrenfest time. In particular, we show that the evolution of a density matrix by the Lindblad equation is close in trace norm to the quantization of the corresponding evolution by the Fokker--Planck equation. Such agreement improves upon recent results \cite{GZ24, HRRb, HRRa}, which proved long time agreement in weaker norms. 
\end{abstract}

\maketitle
\section{Introduction}
A statistical ensemble of quantum states is described using a density matrix, which is a positive operator of unit trace over a Hilbert space, which we will take to be $L^2(\R^n)$. It was shown by Gorini--Kossakowski--Sudarshan \cite{GKS76} in the finite dimensional case, then by Lindblad \cite{Lindblad76} in the bounded operator case, that semigroups on the Banach space of trace class operators that preserve trace and complete positivity are generated by operators of the form
\begin{equation}\label{eq:full_lindbladian}
    \mathcal L A := \frac{i}{h} [P, A] + \frac{\gamma}{h} \sum_{j = 1}^J (L_j A L_j^* - \tfrac{1}{2}(L_j^* L_j A + A L_j^* L_j)).
\end{equation}
When $L_j = 0$, this is simply the Schr\"odinger operator on density matrices, so $P$ is interpreted as the Hamiltonian. The operators $L_j$ are known as jump operators, and they describe interaction of the system with an external Markovian environment, and $\gamma$ measure the coupling strength to the external environment. The evolution equation associated with $\mathcal L$ is called the Lindblad master equation. See Chru{\'s}ci{\'n}ski--Pascazio \cite{history} for a brief survey of the equation. 

Motivated by recent works by Galkowski--Zworski with an appendix by Huang--Zworski \cite{GZ24} and Hern\'andez--Ranard--Riedel \cite{HRRb, HRRa}, we show that the quantum evolution described by the Lindblad master equation agrees with the classical evolution described by the Fokker--Planck equation on long time scales in open systems with self-adjoint jump operators. In particular, we show that the correspondence between the quantum evolution of the density matrix and the Weyl quantization of the classical evolution holds in \textit{trace norm} up to time $t \sim h^{-\ha} \gamma^{\frac{3}{2}}$ for $h^0 \le \gamma \le h^{-1}$. This improves upon \cite{GZ24}, which uses the weaker Hilbert--Schmidt norm, and on \cite{HRRb, HRRa}, which compares the quantum and classical evolution to an intermediate ansatz. Our time of correspondence is also longer than those found in previous works. However, we note that this paper does not handle friction (which require non-self-adjoint jump operators) or the $h^{\frac{1}{3}} < \gamma < h^0$ regime, which are found in \cite{GZ24, HRRb}. This type of long time correspondence is in contrast to closed systems ($\gamma = 0$), where by Egorov's theorem, the time of correspondence is $t \sim \log(1/h)$ -- see \cite[\S 11]{Zworski_semiclassical_analysis} for an overview and references, and see Figure \ref{fig:num_exp} for a numerical comparison of open and closed systems.  
\begin{figure}
    \centering
    \includegraphics[scale = 0.19]{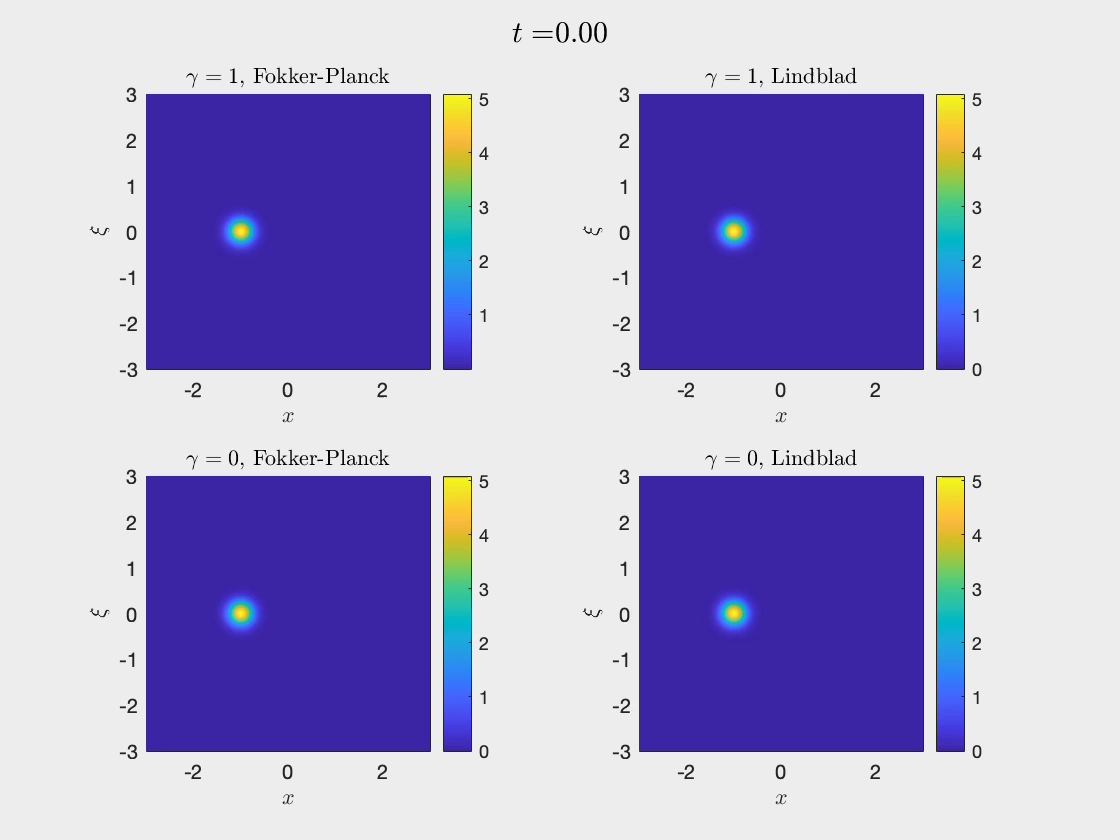}
    \includegraphics[scale = 0.19]{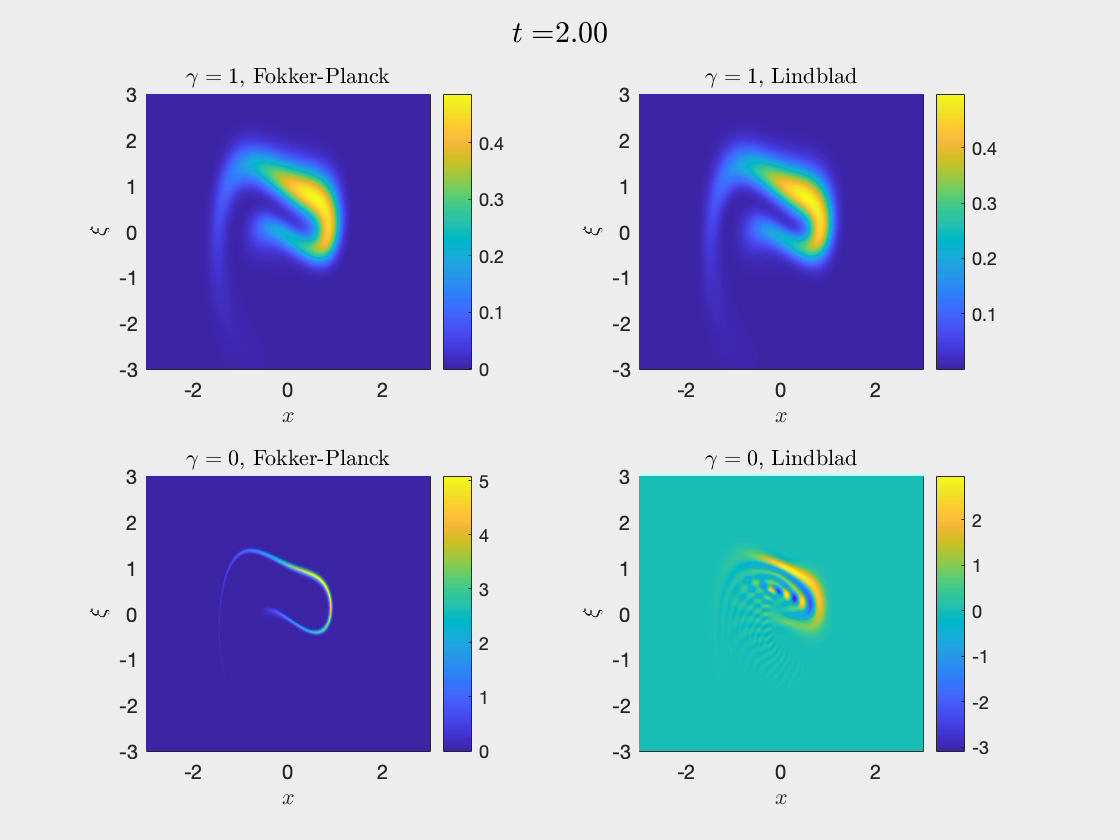}
    \caption{Numerical experiments done in the appendix of \cite{GZ24}. Coherent states (see \eqref{eq:std_coherent}) with $h = 2^{-4}$ are propagated according to the Fokker--Planck and Lindblad evolution with Hamiltonian $p(x, \xi) = \xi^2 + (x^2 - \frac{1}{2})^2$ and jump functions $\ell_1(x, \xi) = x$ and $\ell_2(x, \xi) = \xi$ (see \S\ref{sec:assumptions} for details). Depicted are contour plots of the Fokker--Planck evolution and the symbol of the density matrix of the Lindblad evolution. We see that at $t = 2$, the two evolutions are in agreement for the open system ($\gamma = 1$), but not in agreement for the closed system ($\gamma = 0$).}
    \label{fig:num_exp}
\end{figure}

\subsection{Assumptions on the Lindblad evolution}\label{sec:assumptions}
We consider $P$ and $L_j$ that are $h$-pseudodifferential operators. In particular, we assume that 
\begin{equation}\label{eq:PL_assumptions}
\begin{gathered}
    P = p^\w(x, hD), \quad |\partial^\alpha_{x, \xi} p(x, \xi)| \le C_\alpha, \quad |\alpha| \ge 2, \quad p = \bar p \\
    L_j = \ell_j^\w(x, hD), \quad |\partial^\alpha_{x, \xi} \ell_j(x, \xi)| \le C_\alpha \langle (x, \xi) \rangle^{1 - |\alpha|}, \quad |\alpha| \ge 0, \quad \ell_j = \bar \ell_j,
\end{gathered}
\end{equation}
where $\langle z \rangle := \sqrt{1 + |z|^2}$ and $j = 1, \dots, J < \infty$. Note that we require the symbols of $L_j$ to improve in decay with differentiation. The notation $a^\w(x, hD)$ is the semiclassical Weyl quantization defined in \eqref{eq:weyl}. Here, $p$ is the classical Hamiltonian and the $\ell_j$'s are called jump functions. A consequence of the Weyl quantization is that the quantization of real symbols is formally self-adjoint, so the Lindbladian defined in \eqref{eq:full_lindbladian} simplifies in our case to 
\begin{equation}\label{eq:lindbladian}
    \mathcal L A = \frac{i}{h} [P, A] - \frac{\gamma}{h} \sum_{j = 1}^J [L_j, [L_j, A]].
\end{equation}
We remark that in this case, $A(t) = \Id$ is a solution to the evolution equation $(\partial_t - \mathcal L) A(t) = 0$. Physically, this means that the fully mixed state is a steady state when the jump operators are self-adjoint, so in some sense, these are systems where the dissipation is strong. The classical counterpart to the Lindbladian is given by the leading parts of the semiclassical expansion of $\mathcal L A$. Under our assumptions~\eqref{eq:PL_assumptions}, this produces the Fokker--Planck operator
\begin{equation}
    Q:= H_p + \frac{h\gamma}{2} \sum_{j = 1}^J H_{\ell_j}^2,
\end{equation}
so the associated classical dynamics equation is given by 
\begin{equation}\label{eq:FP_def}
    (\partial_t - Q)a(x, \xi, t) = 0, \qquad a(x, \xi, 0) = a_0(x, \xi)
\end{equation}
See \S\ref{sec:param} and \S\ref{sec:L1} for details. We further make the same strong non-degeneracy assumption as in~\cite{HRRb, HRRa, GZ24}:
\begin{equation}\label{eq:jump_elliptic}
    \mathbf H \mathbf H^* \ge c I_{\R^{2n}} \quad \text{where} \quad \mathbf H:= [H_{\ell_1}, \dots ,H_{\ell_J}] \in \mathrm{Mat}_{2n \times J}(\R).
\end{equation}
This simply guarantees that $\sum_{j = 1}^J H_{\ell_j}^2$ is uniformly elliptic.

\subsection{Propagation of Gaussian states}
We state an important special case of our main result Theorem~\ref{thm:general}. We define the standard $L^2$-normalized coherent states
\begin{equation}
    \psi_{(x_0, \xi_0)} := (2 \pi h)^{-\frac{n}{4}} \exp\left( - \frac{|x - x_0|^2}{2h} + i\frac{ \langle x - x_0, \xi_0 \rangle}{h}\right).
\end{equation}
The density operator associated with $\psi_{(x_0, \xi_0)}$ is
\begin{equation}\label{eq:std_coherent}
    A_{(x_0, \xi_0)} u := \psi_{(x_0, \xi_0)} \langle u, \psi_{(x_0, \xi_0)} \rangle.
\end{equation}
Note that $\|A_{(x_0, \xi_0)}\|_{\tr} = 1$, and 
\[A_{(x_0, \xi_0)} = a^\w_{(x_0, \xi_0)}(x, hD) \quad \text{where} \quad a_{(x_0, \xi_0)}(x, \xi) = 2^n \exp \left(-\frac{|x - x_0|^2 + |\xi - \xi_0|^2}{h} \right).\]

\begin{theorem}\label{thm:gaussian_case}
    Let $\mathcal L$ be the Lindbladian defined in~\eqref{eq:full_lindbladian} and suppose that assumptions~\eqref{eq:PL_assumptions} and~\eqref{eq:jump_elliptic} hold. If $A(t)$ satisfies
    \begin{equation}
        \partial_t A(t) = \mathcal L A(t), \quad A(0) = A_{(x_0, \xi_0)},
    \end{equation}
    then for $h^0 \le \gamma \le h^{-1}$, 
    \begin{equation}
        \|A(t) - a(t)^\w(x, hD)\|_{\tr} \le \begin{cases} Ct (h^\ha + h \gamma) & 0 \le t \le 1 \\ C\left( h^\ha + h \gamma + t(h^{\ha} \gamma^{-\frac{3}{2}} + h \gamma^{-1}) \right) & t > 1 \end{cases}
    \end{equation}
    where $a(t)$ satisfies
    \begin{equation}
        (\partial_t - Q) a(t) = 0, \quad a(0) = a_{(x_0, \xi_0)}.
    \end{equation}
\end{theorem}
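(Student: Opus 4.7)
The plan is to set up a Duhamel formula for the error $E(t) := A(t) - a(t)^w$, use the trace-norm contractivity of the Lindblad semigroup to reduce to an estimate on the semiclassical remainder, and then control that remainder via $L^1$ bounds on derivatives of the Fokker--Planck evolution of the coherent state. Since $A(t)$ satisfies $(\partial_t - \mathcal L) A(t) = 0$ and $a(t)$ satisfies~\eqref{eq:FP_def}, the error $E(t)$ solves
\[
    (\partial_t - \mathcal L) E(t) = -R(t), \qquad E(0) = 0, \qquad R(t) := \mathcal L (a(t)^w) - (Q a(t))^w.
\]
The Lindblad generator~\eqref{eq:lindbladian} generates a completely positive, trace-preserving semigroup, hence $e^{t\mathcal L}$ is a contraction on the Banach space of trace class operators. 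Duhamel's formula then gives $\|E(t)\|_{\tr} \le \int_0^t \|R(s)\|_{\tr}\,ds$, and it suffices to show $\|R(s)\|_{\tr} \le C(h^{\ha} + h\gamma)$ for $s \le 1$ and $\|R(s)\|_{\tr} \le C(h^{\ha} \gamma^{-\frac{3}{2}} + h \gamma^{-1})$ for $s \ge 1$.

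Expanding the commutators in $\mathcal L(a(s)^w)$ via the Weyl calculus asymptotic expansion and using~\eqref{eq:PL_assumptions} to bound the high-order derivatives of $p$ and $\ell_j$, the leading Moyal terms reproduce $(Qa(s))^w$ by construction of $Q$, and one obtains
\[
    R(s) = h^2\, r_P(a(s))^w + \gamma h^3 \sum_j r_{L,j}(a(s))^w,
\]
where $r_P(a)$ is a finite linear combination of products of derivatives of $p$ of order $\ge 3$ with derivatives of $a$ of order $\le 3$, and $r_{L,j}(a)$ is analogous, involving derivatives of $\ell_j$ and derivatives of $a$ of order $\le 4$. Because $a(s)$ is concentrated at the coherent-state scale $\sqrt{h}$ (or $\sqrt{h\gamma s}$ once diffusion dominates), the trace norm of a Weyl-quantized symbol at this scale satisfies $\|b^w\|_{\tr} \le C h^{-n}\|b\|_{L^1(\R^{2n})}$ (via a Bargmann-transform / anti-Wick argument, equivalently by metaplectic rescaling to the non-semiclassical setting). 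The problem is thus reduced to $L^1$ control of a bounded number of derivatives of $a(s)$.

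The main technical input is a derivative estimate for the Fokker--Planck evolution of the coherent state. The initial symbol $a_{(x_0, \xi_0)}$ is a Gaussian of width $\sqrt{h}$ with $L^1$-mass of order $h^n$, and this mass is preserved by $Q$ because $H_p$ and $H_{\ell_j}^2$ are in divergence form. The Hamiltonian $H_p$ transports the symbol along the bicharacteristic flow without mixing scales (its linearisation is bounded on compact times by~\eqref{eq:PL_assumptions}), while the diffusion $\tfrac{h\gamma}{2}\sum_j H_{\ell_j}^2$, uniformly elliptic by~\eqref{eq:jump_elliptic}, regularises at scale $\sqrt{h\gamma s}$. Together these give
\[
    \|\partial^\alpha a(s)\|_{L^1(\R^{2n})} \le C_\alpha h^n \cdot \begin{cases} h^{-|\alpha|/2}, & 0 \le s \le 1, \\ (h\gamma)^{-|\alpha|/2}, & s \ge 1, \end{cases}
\]
for $|\alpha| \le 4$. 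Plugging in $|\alpha| = 3$: $\|h^2 r_P(a(s))^w\|_{\tr} \le C h^2 \cdot h^{-n} \cdot h^n (h\gamma)^{-3/2} = C h^{\ha} \gamma^{-\frac{3}{2}}$ for $s \ge 1$ and $C h^{\ha}$ for $s \le 1$. Plugging in $|\alpha| = 4$: $\|\gamma h^3 r_{L,j}(a(s))^w\|_{\tr} \le C h \gamma^{-1}$ for $s \ge 1$ and $C h\gamma$ for $s \le 1$. Integrating in $s$ and matching at $s = 1$ produces the two bounds stated in the theorem.

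The main obstacle is the Fokker--Planck derivative estimate above, where one must match transport and diffusive regimes and confirm that the effective scale really is $\sqrt{h\gamma}$ for $s \ge 1$ without degradation from Hamiltonian spreading. One natural route is to linearise the Hamiltonian flow along the classical trajectory through $(x_0, \xi_0)$ and freeze the $\ell_j$ at their values there, reducing to a constant-coefficient hypoelliptic equation whose Green's function is an explicit Gaussian of the expected width, and then bootstrap via~\eqref{eq:PL_assumptions} and $L^1$-mass conservation to handle the nonlinear corrections. This is the key quantitative step of the argument.
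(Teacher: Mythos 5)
Your overall reduction is exactly the one the paper uses: write $E(t) = A(t) - a(t)^\w$, observe $E$ solves $(\partial_t - \mathcal L)E = -R$ with $R = \mathcal L(a^\w) - (Qa)^\w$, invoke trace-preserving complete positivity of $e^{t\mathcal L}$ (contraction on trace class, via Davies) to reduce to $\int_0^t \|R(s)\|_{\tr}\,ds$, expand $R$ via the Moyal calculus (the $k=2$ term in $[P,A]$ and the $k=3$ term in $[L_j,[L_j,A]]$ cancel by Weyl symmetry, so the remainders are $h^2\partial^3 p\,\partial^3 a$ and $\gamma h^3 \partial^4\ell\,\partial^4 a$ to leading order), and convert $L^1$ bounds on $\partial^\alpha a$ into trace-norm bounds on the Weyl quantization. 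The exponent bookkeeping is correct and matches the paper, which proves this as a special case $\rho = \tfrac12$ of Theorem~\ref{thm:general}, via the $L^1$-based composition Lemma~\ref{lem:composition} and the trace-norm Lemma~\ref{lem:trace}. You also correctly flag that divergence-freeness of $v = H_p$ (\eqref{eq:divfree}) gives $L^1$-mass conservation.

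The genuine gap is your treatment of what you correctly call \emph{the key quantitative step}: the estimate $\|\partial^\alpha a(s)\|_{L^1} \lesssim h^n(h\gamma)^{-|\alpha|/2}$ for $s \ge 1$, uniformly in $s$ up to the relevant time scale. Your sketch --- ``linearise the Hamiltonian flow along the classical trajectory through $(x_0,\xi_0)$ and freeze the $\ell_j$ there'' --- is the wave-packet/Gaussian-beam ansatz, and it does not reach the stated conclusion. First, after time $O(1)$ the solution $a(s)$ has diffused to scale $\sqrt{h\gamma}$ and, for $s\gg1$, has spread over a region that does not live near a single classical trajectory; the linearized frozen-coefficient model degrades on exactly the Ehrenfest time scale $\log(1/h)$ that the theorem is designed to beat. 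Second, to extend past $t\simeq1$ one iterates the semigroup, which means the required smoothing estimate must hold for arbitrary $L^1$ initial data (the kernel $K(x,y,t)$, uniformly in $y$), not just for a Gaussian centered at $(x_0,\xi_0)$. The paper's resolution is the parametrix of \S\ref{sec:param}: a Gaussian $K_0(x,y,t)$ centered on the \emph{exact} nonlinear backward flow $\varphi^{-t}(y)$ with covariance $A^{-1}(x)/\epsilon^2 t$, corrected to $K_j$ via the Levi iteration, so that $(\partial_t - Q)K_j = -R_{j+1}$ with pointwise Gaussian bounds on $R_k$ and $K_j$ uniform in $\epsilon = \sqrt{h\gamma/2}$ (Lemmas~\ref{lem:R1_est}--\ref{lem:R1_est_y}). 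These feed Proposition~\ref{prop:short_time} ($W^{r,1}$ propagation up to time $\tau_r$) and Proposition~\ref{prop:smoothing} ($\epsilon$-semiclassical smoothing past $\tau_\alpha$, via Duhamel against $R_{j+1}$ with $j \ge |\alpha|$), which together give Corollary~\ref{cor:semiclassical} and hence the FP $L^1$ input your proof requires. This parametrix occupies roughly half the paper, so while your reduction is correct and your final exponents are right, the proof as written leaves its central estimate unproved, and the route you gesture at would not furnish it.
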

\begin{Remarks}
    1. We see that the time of classical--quantum correspondence for the constant to large coupling strength $h^0 \le \gamma < h^{-1}$ regime is then $t \sim h^{-\ha} \gamma^{\frac{3}{2}}$, which improves on the time of correspondence found in \cite{HRRa}, where $t \sim h^{-\ha}$. Furthermore, Theorem \ref{thm:gaussian_case} is a direct comparison of trace norms. This is in contrast to \cite{HRRb, HRRa} which compares $A(t)$ and $a(t)$ to an intermediate ansatz, and to \cite{GZ24} which compares the Hilbert--Schmidt norm. However, we stress that we only consider the case with self-adjoint jump operators in this paper, and we assume slightly stronger symbol bounds on the jump operators $L_j$ than in \cite{HRRb, GZ24}. 

    \noindent
    2. Numerical experiments were done in \cite{GZ24} that contrasts the open system case ($\gamma = 1$) to the closed system case ($\gamma = 0$). See Figure \ref{fig:num_exp}. It is clear visually that the Lindblad evolution much more closely matches the Fokker-Planck evolution in the open system case.

    \noindent
    3. The result of Theorem \ref{thm:gaussian_case} is not limited to pure Gaussian states. One can replace the initial condition $A(0)$ by a mixture of ``not-too-squeezed'' pure Gaussian states. More precisely, for $z_0= (x_0, \xi_0) \in \R^{2n}$ and $h^{-1} \sigma \in \mathrm{Sp}(2n; \R)$, define the pure Gaussian state
    \begin{equation}
        A_{z_0, \sigma} := a_{z_0, \sigma}^\w(x, hD), \quad a_{z_0, \sigma}(z) := 2^n \exp (\langle z - z_0, \sigma^{-1} (z - z_0) \rangle), \quad z = (x, \xi)
    \end{equation}
    Let $\lambda_h$ be a measure on $\mathrm{Mat}(2n \times 2n; \R) \times \R^{2n}$ such that 
    \begin{equation*}
        \supp \lambda_h \subset \left\{\sigma: \frac{\sigma}{h} \in \mathrm{Sp}(2n; \R),\, \sigma \ge h c \right\} \times \R^{2n}.
    \end{equation*}
    for some $c > 0$ independent of $h$. Then Theorem \ref{thm:gaussian_case} holds with 
    \begin{equation}
        A(0) = \int A_{z_0, \sigma} \, d \lambda_h(z_0, \sigma) \quad \text{and} \quad a(0) = \int a_{z_0, \sigma} \, d \lambda_h(z_0, \sigma).
    \end{equation}
    This is the setting of \cite{HRRb} when $\gamma \ge 1$. For Theorem~\ref{thm:gaussian_case} to hold, we just need that $A(0)$ is the quantization of an element of $S^{L^1}_{1/2}$ defined in~\eqref{eq:L1_symbol}. Such mixtures of not-too-squeezed Gaussians clearly fall into this class. We note that in the weakly coupled regime $\gamma < 1$, \cite{HRRb} actually allows for mixtures of not-too-squeezed states belonging to more exotic symbol classes. 
\end{Remarks}

\section{Symbols and quantization}\label{sec:symbols}
Classical observables can be quantized via the Weyl quantization process to obtain quantum observables. Let $\mathscr S$ denote the space of Schwartz functions. The dual $\mathscr S'$ is the space of tempered distributions. For $a \in \mathscr S'(\R^n_x \times \R^n_\xi)$ and $u \in \mathscr S(\R^n)$, 
\begin{equation}\label{eq:weyl}
    \Op^\w_h(a) u := a^\w(x, hD)u := \frac{1}{(2 \pi h)^n} \int_{\R^n} \int_{\R^n} a \Big(\frac{x + y}{2}, \xi \Big) e^{\frac{i}{h} \langle x - y, \xi \rangle} u(y)\, dy d\xi
\end{equation}
is well defined as an element of $\mathscr S'(\R^n)$ -- see \cite[Theorem 4.2]{Zworski_semiclassical_analysis}. To compose these operators, we recall the standard symbol classes. Let $m :\R^{2n} \to [0, \infty)$ be such that $m(z)/m(w) \le C \langle z - w \rangle^N$ for some $N$. The define
\begin{equation}
    a \in S_{\delta}(m) \iff |\partial_z^\alpha a(z, h)| \le C_\alpha h^{-\delta |\alpha|}m(z) \quad \text{for all} \quad z = (x, \xi) \in \R^{2n}.
\end{equation}
When $m = 1$ and $\delta = 0$, we simply write $S_{0}(1) = S$. Quantizations of such symbols are called $h$-psudodifferential operators (or $h$-pseudor for short). The point here is that for $a \in S_\delta(m)$, $\Op_h^w(a) : \mathscr S \to \mathscr S$, so the composition of $h$-pseudors makes sense, and is in fact still an $h$-pseudor -- see \cite[Theorems 4.16-18]{Zworski_semiclassical_analysis}.

The symbol classes we will use are special subsets of $S_{\delta}(m)$. First, define the symbol class
\begin{equation}
    a \in S_{(k)} \iff |\partial_{x, \xi}^\alpha a(x, \xi)|\le C_\alpha \quad \text{for all} \quad |\alpha| \ge k.
\end{equation}
Clearly, $S_{(k)} \subset S_0(\langle z \rangle^k)$. Note that from our assumptions~\eqref{eq:PL_assumptions} on the classical Hamiltonian $p$ and the jump functions $\ell_j$, we have $p \in S_{(2)}$ and $\ell_j \in S_{(1)}$. 

Next, we define the classes to which the symbol of the density matrix belongs. It is first useful to recall a characterization of the trace norm of $\Op^\w_h(a)$ in terms of the symbol $a$. 
\begin{lemma}\label{lem:trace}
    Assume that $a \in \mathscr S'(\R^{2n})$ is such that 
    \begin{equation*}
        \sum_{|\alpha| \le 2n + 1} \|\partial_{x, \xi}^\alpha a\|_{L^1} < \infty.
    \end{equation*}
    Then $\Op^\w(a)$ is in trace class and
    \begin{equation}
        \|\Op^\w_h(a)\|_{\tr} \le C h^{-n} \sum_{|\alpha| \le 2n + 1} h^{\frac{\alpha}{2}} \|\partial^{\alpha}_{x, \xi} a\|_{L^1}.
    \end{equation}
\end{lemma}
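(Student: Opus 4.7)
The plan is to localize $a$ at the semiclassical phase space scale $\sqrt h$ and, on each cell of that size, reduce to a non-semiclassical ($h = 1$) trace norm estimate. Fix $\chi \in C_c^\infty(\R^{2n})$ with $\sum_{j} \chi(\cdot - j) \equiv 1$, the sum being over the integer lattice in $\R^{2n}$. Set $\chi_j^h(z) := \chi(z/\sqrt h - j)$, so $\{\chi_j^h\}$ is a partition of unity by cutoffs of width $O(\sqrt h)$ with bounded overlap, and decompose $a = \sum_j a_j$ with $a_j := a \chi_j^h$.

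For each $j$, the metaplectic dilation $(U_h u)(x) := h^{-n/4} u(x/\sqrt h)$ is unitary on $L^2(\R^n)$ and conjugates $\Op^\w_h(a_j)$ to $\Op^\w_1(f_j)$, where $f_j(z) := a(\sqrt h(z+j))\chi(z)$ is compactly supported inside $\mathrm{supp}\,\chi$. Since unitary conjugation preserves the trace norm, it suffices to invoke the standard non-semiclassical estimate
\[
\|\Op^\w_1(f)\|_{\tr} \le C \sum_{|\alpha| \le 2n + 1} \|\partial^\alpha f\|_{L^1}
\]
for $f$ of fixed compact support. This follows for instance from identifying $f$ with an element of Feichtinger's modulation space $S_0(\R^{2n})$, whose norm is the anti-Wick $L^1$ and which embeds continuously into the trace class under Weyl quantization; alternatively, one can place the Schwartz kernel of $\Op^\w_1(f)$ in a Sobolev space of order just above $n$ in each variable.

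Applying this estimate to each $f_j$ and expanding $\partial^\alpha f_j$ by Leibniz, every derivative that falls on $a(\sqrt h(z+j))$ produces a factor $h^{1/2}$ from the chain rule, while derivatives on $\chi$ contribute $O(1)$. Changing variables $w = \sqrt h(z+j)$ inside each cell introduces a Jacobian $h^{-n}$, and the bounded overlap of $\{\chi_j^h\}$ collapses the resulting $j$-sums to $O(1)$. Putting these pieces together produces exactly the claimed bound
\[
\|\Op^\w_h(a)\|_{\tr} \le C h^{-n} \sum_{|\alpha| \le 2n+1} h^{|\alpha|/2} \|\partial^\alpha a\|_{L^1}.
\]
The main technical nuisance is tracking the three sources of $h$-powers (chain-rule gains on $a$, the volume Jacobian, and the absence of any loss when differentiating the cutoff because of the $\sqrt h$ rescaling) and verifying that they combine with no spurious factors; once this accounting is done, the non-semiclassical estimate closes the argument.
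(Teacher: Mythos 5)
Your proposal is correct, and the core idea is the same as the paper's: rescale phase space by $\sqrt h$ so that the semiclassical Weyl quantization becomes the non-semiclassical one, then apply a known $h = 1$ trace-norm bound in terms of $L^1$ norms of $\leq 2n+1$ derivatives, and finally undo the change of variables to pick up the $h^{-n}$ Jacobian and the $h^{|\alpha|/2}$ gains per derivative. The difference is that you interpose a $\sqrt h$-scale partition of unity and work cell by cell, reducing to a compactly supported non-semiclassical estimate, whereas the paper simply performs the dilation globally on $a$ and invokes the global non-semiclassical estimate $\|\Op^\w(\tilde a)\|_{\tr} \le C\sum_{|\alpha|\le 2n+1}\|\partial^\alpha \tilde a\|_{L^1}$ (Dimassi--Sj\"ostrand, Thm.\ 9.3), which requires no compact support hypothesis. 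Your localization is therefore extra machinery: it is harmless (the bounded-overlap sum and Leibniz rule close exactly as you indicate), but once you have a global $h=1$ estimate it buys you nothing. One small imprecision: the dilation $U_h$ alone conjugates $\Op^\w_h(a_j)$ to $\Op^\w_1(\tilde a_j)$ with $\tilde a_j(z) = a(\sqrt h z)\chi(z - j)$; to recenter at the origin and obtain your $f_j(z) = a(\sqrt h(z+j))\chi(z)$ you additionally need a Heisenberg (phase-space) translation. Since that is also unitary it does not change the trace norm, but it is worth stating if you keep the cell-by-cell formulation.
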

\begin{proof}
We quickly show how this is obtained by rescaling the non-semiclassical trace norm estimate. Observe that 
\begin{equation}
    \Op_h^\w(a) = (M^{-1})^*\Op^\w(\tilde a) M^*
\end{equation}
where $M^*$ is the pullback by $M(x) = h^\ha x$, $\tilde a(x, \xi) = a(h^\ha x, h^\ha \xi)$, and
\begin{equation*}
    \Op^\w(\tilde a) u := \frac{1}{(2 \pi)^n} \int_{\R^n} \int_{\R^n} \tilde a\Big( \frac{x + y}{2}, \xi \Big) e^{i \langle x - y, \xi \rangle} u(y)\, dy d\xi.
\end{equation*}
It follows from \cite[Theorem 9.3]{DS99} that 
\begin{equation}
    \|\Op^\w(\tilde a)\|_{\tr} \le C \sum_{|\alpha| \le 2n + 1} \|\partial^\alpha_{x, \xi} \tilde a\|_{L^1},
\end{equation}
so it follows from \cite[(C.3.1)]{Zworski_semiclassical_analysis}
\[\|\Op^\w_h(a)\|_{\tr} = \|\Op^\w(\tilde a)\|_{\tr} \lesssim h^{-n} \sum_{|\alpha| \le 2n + 1} h^{\frac{\alpha}{2}} \|\partial^{\alpha} a\|_{L^1}\]
as desired.
\end{proof}

The symbol of the density matrices we are interested in lie in the $L^1$-based symbol class 
\begin{equation}\label{eq:L1_symbol}
    a \in S^{L^1}_\rho \iff h^{-n} \|\partial^\alpha a\|_{L^1} \le C_\alpha h^{-\rho |\alpha|}
\end{equation}
An important special case here is when $\rho = \ha$. Indeed, we see that the density matrices associated with standard coherent states defined in~\eqref{eq:std_coherent} are quantizations of symbols in the class $S^{L^1}_{1/2}$. Note that the coherent states have unit trace, so Lemma~\ref{lem:trace} gives optimal dependence on $h$ for $S^{L^1}_{1/2}$ symbols.

By the Sobolev embedding $W^{n, 1}(\R^n) \hookrightarrow L^\infty(\R^n)$, we see that
\begin{equation}\label{eq:basic_S_embedding}
    S^{L^1}_\rho \subset h^{n(1 - 2\rho)} S_\rho(1)
\end{equation}
We need to compose $S^{L^1}_\rho$ symbols with symbols $S_{(k)}$ and understand the asymptotic expansion of the composition. Define the tensor product 
\begin{equation}
    c(z, w) \in S \otimes S^{L^1}_\rho \iff h^{-n} \|\sup_{z} |\partial_z^\alpha \partial_w^\beta c(z, \bullet)|\|_{L^1} \le C_{\alpha \beta} h^{-\rho|\beta|}, \quad z, w \in \R^{2n}
\end{equation}
We first have the following lemma, which is the $L^1$ counterpart to \cite[Lemma 2.1]{GZ24}.
\begin{lemma}
    Let $Q: \R^{2n} \times \R^{2n}$ be a non-degenerate bilinear quadratic form. Then 
    \begin{equation}
        e^{ihQ(D_z, D_w)}: S \otimes S^{L^1}_\rho \to S \otimes S^{L^1}_\rho
    \end{equation}
    is continuous. Furthermore, we have the asymptotic expansion
    \begin{equation}
        e^{ihQ(D_z, D_w)} a - e^{\frac{i \pi}{4} \sgn Q} \sum_{k = 0}^{N - 1} \Big( \frac{h}{i} \Big)^k \frac{1}{k!} Q(D_z, D_w)^k a(z, w) \in h^{N(1 - \rho)} S \otimes S^{L^1}_\rho
    \end{equation}
    for every $N$. 
\end{lemma}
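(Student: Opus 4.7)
The plan is to obtain both statements from a Taylor expansion of the Fourier symbol $e^{ihQ(\zeta, \omega)}$, with the continuity statement reduced to an oscillatory integral estimate.

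First I would express $T_h := e^{ihQ(D_z, D_w)}$ explicitly. Non-degeneracy of $Q$ gives $Q(\zeta, \omega) = \langle A\zeta, \omega\rangle$ for some invertible $A \in \mathrm{Mat}(2n \times 2n; \R)$, and Fourier inversion yields the kernel representation
\[T_h a(z, w) = \frac{1}{(2\pi h)^{2n} |\det A|} \iint e^{-i\langle A^{-1}(z - z'),\, w - w'\rangle/h}\, a(z', w')\, dz'\, dw'.\]
Viewed as a quadratic form on $\R^{4n}$, this $Q$ has vanishing signature (its Hessian is off-block-diagonal), so in our setting $e^{i\pi \sgn Q/4} = 1$ and the expansion asserted by the lemma reduces to ordinary Taylor expansion of the symbol.

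Next, Taylor expand $e^{ihQ(\zeta, \omega)}$ to order $N$ with integral remainder and insert into the Fourier representation to obtain the finite sum plus
\[R_N a = \frac{(ih)^N}{(N-1)!}\int_0^1 (1 - t)^{N-1}\, Q(D_z, D_w)^N\, T_{th} a\, dt.\]
Because $Q(D_z, D_w)^N$ is a sum of monomials $\partial_z^\alpha \partial_w^\beta$ with $|\alpha| = |\beta| = N$, the definition of $S \otimes S^{L^1}_\rho$ immediately gives $Q(D_z, D_w)^N a \in h^{-\rho N}\, S \otimes S^{L^1}_\rho$, and the advertised $h^{N(1-\rho)}$ gain will follow once we establish uniform continuity of $T_{th}$ on $S \otimes S^{L^1}_\rho$.

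The hard part is therefore continuity. Since $T_t$ commutes with every $\partial_z^\alpha \partial_w^\beta$, it suffices to control $\|\sup_z|T_t b(z, \cdot)|\|_{L^1_w}$ by a finite collection of $S \otimes S^{L^1}_\rho$ seminorms of $b$, uniformly for $t \in [0, 1]$. The unit-modulus kernel forces integration by parts, and I would introduce a smooth dyadic partition $\{\chi_{j, k}(u, v)\}_{j, k \ge 0}$ in the difference variables $(u, v) = (z - z', w - w')$ at the natural scale $|u|, |v| \sim (th)^{1/2}$. On the innermost piece $j = k = 0$ the bound is direct: Fubini turns the $L^\infty_z L^1_w$ estimate into the volume of the support, which exactly cancels the $h^{-2n}$ kernel prefactor. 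On the outer pieces the bilinearity of the phase is essential: a $z'$-derivative of the phase has magnitude $|w - w'|/(th)$ and costs a $z$-derivative of $b$ (free in the class), while a $w'$-derivative of the phase has magnitude $|z - z'|/(th)$ and costs an expensive $w$-derivative at rate $h^{-\rho}$. Performing more than $2n$ integrations by parts in each direction makes the dyadic sum convergent, and the cumulative $h$-power works out to be non-negative in the regime $\rho \le 1/2$ of interest here.
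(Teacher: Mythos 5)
Your overall strategy is sound and genuinely different from the paper's. Where the paper rescales first (setting $v_1 = h^{-\rho} w_1$, then running stationary phase in the new variable with effective semiclassical parameter $h^{1-\rho}$ and estimating the remainder by Sobolev embedding and non-stationary phase), you Taylor-expand the Fourier multiplier $e^{ih Q}$ with integral remainder and reduce everything to one cleaner-looking claim: the operators $T_{th} = e^{ith Q(D_z, D_w)}$ are uniformly bounded on $S \otimes S^{L^1}_\rho$ for $t \in [0,1]$. That reduction is clean: because $Q$ is bilinear, $Q(D_z,D_w)^N$ has exactly $N$ derivatives in $z$ and $N$ in $w$, so $Q(D)^N a \in h^{-\rho N} S\otimes S^{L^1}_\rho$ and $(ih)^N$ supplies the $h^{N(1-\rho)}$ gain. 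Your observation that the associated quadratic form on $\R^{4n}$ has zero signature (so $e^{i\pi \sgn Q/4}=1$) is also correct and worth stating explicitly; the paper leaves that implicit.

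There is, however, a real gap in the continuity step. Your dyadic decomposition is \emph{isotropic} at scale $|u|, |v| \sim (th)^{1/2}$, and you yourself note that the $h$-bookkeeping only closes for $\rho \le 1/2$: each $w'$-integration by parts gains $(th)^{1/2}/2^j$ but costs $h^{-\rho}$, so the net gain $t^{1/2} h^{1/2 - \rho}/2^j$ becomes an $h$-loss as soon as $\rho > 1/2$. But the lemma is stated, and used, for all $0 \le \rho < 1$ — Lemma \ref{lem:composition} is proved for that full range, and Remark 2 after Theorem \ref{thm:general} invokes it with $\rho$ up to $2/3$. Declaring $\rho \le 1/2$ ``the regime of interest'' therefore undercounts what's needed. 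The fix is the same anisotropy the paper builds in via its change of variables: decompose at scale $|u| \sim 2^j \, t h^{1-\rho}$, $|v| \sim 2^k h^\rho$. Then the near-piece volume is still $(th)^{2n}$, cancelling the kernel prefactor; a $z'$-IBP gains $th/|v| \sim th^{1-\rho}/2^k$ at no cost; and a $w'$-IBP gains $th/|u| \sim h^\rho/2^j$ at cost $h^{-\rho}$, for a net $2^{-j}$ with \emph{no} residual $h$-power, uniformly for all $\rho < 1$ and $t \in [0,1]$. With that change your argument goes through and is a legitimate alternative to the paper's stationary-phase route.
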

\begin{proof}
Let
\begin{equation}\label{eq:eQa}
    c(z, w):= e^{ihQ(D_z, D_w)} a(z, w)
\end{equation}
where $Q(\zeta, \omega):= \ha \langle B(\zeta, \omega), (\zeta, \omega)\rangle$ is a non-degenerate bilinear quadratic form and $a \in S \otimes S^{L^1}_{\rho}$. By \eqref{eq:basic_S_embedding}, we see that \eqref{eq:eQa} is indeed well-defined and 
\begin{equation}
    c(z, w) = \frac{|\det B|^{-\ha}}{(2 \pi h)^{2n}} \iint e^{\frac{i}{h} \varphi(z_1, w_1)} a(z + z_1, w + w_1) \, dz_1 dw_1
\end{equation}
can be understood as an oscillatory integral, where we have the quadratic phase given by
\begin{equation}
    \varphi(z_1, w_1):= - \ha \langle B^{-1}(z_1, w_1), (z_1, w_1) \rangle.
\end{equation}
We change variables to $v_1 = h^{-\rho} w_1$, and we apply a cutoff $\chi \in \CIc(\R^{4n})$ such that $\chi = 1$ near $0$ and $\chi(z_1, v_1) = 0$ when $|(z_1, v_1)| \ge 1$. Then $c$ is given by 
\begin{align*}
    c(z, w) &= \frac{|\det B|^{\ha}}{(2 \pi h^{1 - \rho})^{2n}} \iint e^{\frac{i}{h^{1 - \rho}} \varphi(z_1, v_1)} a(z + z_1, w + h^\rho v_1) \, dz_1 dv_1 \\
    &= \frac{|\det B|^{\ha}}{(2 \pi h^{1 - \rho})^{2n}} \iint e^{\frac{i}{h^{1 - \rho}} \varphi(z_1, v_1)} \chi(z_1, v_1) a(z + z_1, w + h^\rho v_1) \, dz_1 dv_1 \\
    &\qquad \quad + \frac{|\det B|^{\ha}}{(2 \pi h^{1 - \rho})^{2n}} \iint e^{\frac{i}{h^{1 - \rho}} \varphi(z_1, v_1)} (1 - \chi(z_1, v_1)) a(z + z_1, w + h^\rho v_1) \, dz_1 dv_1 \\
    &=: c_1(z, w) + c_2(z, w).
\end{align*}
We first analyze $c_1$ by stationary phase, which gives the formal expansion 
\begin{equation}\label{eq:stat_phase}
    c_1(z, w) \sim e^{\frac{i \pi}{4} \sgn B} \sum_{k = 0}^\infty \Big(\frac{h^{1 - \rho}}{i}\Big)^k \frac{1}{k!} Q(D_{z_1}, D_{v_1})^k a(z + z_1, w + h^\rho v_1)|_{z_1 = v_1 = 0}.
\end{equation}
Let $c_{1, N}(z, w)$ be the $N$-term truncation of the formal series~\eqref{eq:stat_phase}. The error is given by 
\begin{align*}
    &|\partial_z^{\alpha_1} \partial_w^{\alpha_2} (c_1(z, w) - c_{1, N}(z, w))| \\
    &\le C_N h^{(1 - \rho)N} \sum_{|\beta_1| + |\beta_2| \le 2 N + 4n + 1} h^{-\rho|\alpha_2|} \sup_{|(z_1, v_1)| \le 1} |\partial_{z_1}^{\alpha_1 + \beta_1} \partial_{v_1}^{\alpha_2 + \beta_2} a(z + z_1, w + h^\rho v_1)| \\
    &=: C_N h^{(1 - \rho) N} \sum_{|\beta_1| + |\beta_2| \le 2 N + 4n + 1} R_{\alpha, \beta}(z, w).
\end{align*}
By Sobolev embedding, we see that 
\begin{equation}
    |R_{\alpha, \beta}(z, w)| \le h^{-\rho|\alpha_2|} \sum_{|\gamma| \le 2n} \|\partial_{(z_1, v_1)}^{\alpha + \beta + \gamma} a(z + \bullet, w + h^\rho \bullet)\|_{L^1(B_{\R^{4n}}(0, 1))}
\end{equation}
Then
\begin{align*}
    &\|\sup_z |R_{\alpha, \beta}(z, \bullet)|\|_{L^1(\R^{2n})} \\
    &\le h^{-\rho|\alpha_2|} \int_{\R^{2n}} \int_{|(z_1, v_1)|\le 1} \sup_z |\partial_{(z_1, v_1)}^{\alpha + \beta + \gamma} a(z + z_1, w + h^\rho v_1)|\, dz_1 dv_1dw \\
    &\le h^{\rho(|\beta_2| + |\gamma_2|)} \int_{B_{\R^{4n}}(0, 1)}\|\sup_z|\partial_{(z, w)}^{\alpha + \beta + \gamma} a(z + z_1, w)|\|_{L^1_w}\, dz_1 dv_1 \\
    &\le h^{n - \rho|\alpha_2|}
\end{align*}
In particular, this means that 
\[c_1(z, w) - c_{1, N}(z, w)\in h^{(1 - \rho)N} S \otimes S^{L^1}_\rho.\]
Now we estimate $c_2$, which we should expect to be a residual term. We do this by integration by parts. We see that for $N > 2n + 1$, uniformly in $z$, we have the estimate
\begin{align*}
    &h^{2n(1 - \rho)}\|\sup_z |\partial_{(z, w)}^\alpha c_2(z, \bullet)|\|_{L^1(\R^{2n})} \\
    &\le C\int \sup_z \left| \iint e^{\frac{i}{h^{1 - \rho}} \varphi(z_1, v_1)}(1 - \chi(z_1, v_1)) \partial_{(z, w)}^\alpha a(z + z_1, w + h^\rho v_1)\, dz_1 dv_1 \right| dw \\
    &\le Ch^{N(1 - \rho)} \iiint \langle(z_1, v_1) \rangle^{-N} \sum_{|\beta| \le N} \sup_z |\partial^\alpha_{(z, w)} \partial_z^{\beta_1} (h^\rho \partial_w)^{\beta_2} a(z + z_1, w + h^\rho v_1)| \, dz_1 dv_1 dw \\
    &\le C h^{N(1 - \rho)} \iint \langle (z_1, v_1) \rangle^{-N} \sum_{|\beta| \le N} \|\sup_z|\partial^\alpha_{(z, w)} \partial_z^{\beta_1} (h^\rho \partial_w)^{\beta_2} a(z + z_1, w)|\|_{L^1_w}\, dz_1\, dv_1 \\
    &\le C h^{N(1 - \rho) + n - \rho|\alpha_2|}.
\end{align*}
Therefore, $c_2 \in h^{(N - 2n)(1 - \rho)} S \otimes S^{L^1}_\rho$ for arbitrary $N$. Hence it follows that 
\begin{equation*}
    c(z, w) - c_{1, N}(z, w) \in h^{(1 - \rho) N} S \otimes S^{L^1}_\rho
\end{equation*}
as desired.
\end{proof}

The lemma above gives us the desired composition properties of $S_{(k)}$ with $S^{L^1}_\rho$. 
\begin{lemma}\label{lem:composition}
    Let $a \in S_{(k)}$ and $b \in S^{L^1}_\rho$ for $k \in \N$ and $0 \le \rho < 1$. Then there exists $c \in \mathscr S'$ such that 
    \[\Op_h^\w(a) \Op_h^\w(b) = \Op_h^\w(c),\]
    and
    \begin{equation}
        c(x, \xi) - \sum_{j = 0}^{N - 1} \frac{1}{j!} \left( \frac{h}{2 i} \sigma(D_x, D_\xi, D_y, D_\eta)^j a(x, \xi) b(y, \eta) \right)\bigg |_{\substack{y = x \\ \eta = \xi}} \in h^{N(1 - \rho)} S_\rho^{L^1_\rho}
    \end{equation}
    for all $N \ge k$, where $\sigma(x, \xi, y, \eta) := \langle \xi, y \rangle - \langle x, \eta \rangle$ denotes the standard symplectic form.
\end{lemma}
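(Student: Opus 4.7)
The natural starting point is the standard Moyal formula
\[
c(z) = e^{\tfrac{h}{2i}\sigma(D_{z_1}, D_{z_2})} [a(z_1)\, b(z_2)]\big|_{z_1 = z_2 = z},
\]
which would reduce directly to the preceding lemma (with $Q = -\tfrac{1}{2}\sigma$) if $a \otimes b$ lay in $S \otimes S^{L^1}_\rho$. The obstacle is that $a \in S_{(k)}$ may grow polynomially of order $k$ in $z_1$. To get around this, I would Taylor expand $a$ around the evaluation point $z$ to order $N - 1$:
\[
a(z_1) = \sum_{|\alpha| < N} \frac{\partial^\alpha a(z)}{\alpha!} (z_1 - z)^\alpha + R_N(z_1, z),
\]
with $R_N(z_1, z) = \sum_{|\alpha| = N} (z_1 - z)^\alpha\, r_\alpha(z_1, z)$ in the standard integral form. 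Since $N \ge k$ and by hypothesis $\partial^\beta a$ is bounded for $|\beta| \ge k$, each $r_\alpha$ lies in $S(\R^{2n}_{z_1} \times \R^{2n}_z)$ uniformly in $h$.

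The polynomial part reproduces the first $N$ Moyal terms exactly. Expanding the exponential as a power series, its $j$-th term applies $\sigma^j$, which takes $j$ derivatives in $z_1$ of the monomial $(z_1 - z)^\alpha$; this vanishes at $z_1 = z$ unless $j = |\alpha|$, in which case it produces the $|\alpha|$-th Moyal term with coefficient $\partial^\alpha a(z)$. Summing over $|\alpha| < N$ recovers the truncated series in the statement with no error.

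For the Taylor remainder, I would change variables to $u = z_1 - z$ (noting $\sigma(D_{z_1}, D_{z_2}) = \sigma(D_u, D_{z_2})$ by translation-invariance) and commute the $u^\alpha$ factor through the exponential. The key algebraic observation is that $\sigma(D_u, D_{z_2})$ is a constant-coefficient quadratic form in first-order operators, so $[u_i, \sigma(D_u, D_{z_2})]$ is a first-order constant-coefficient operator in $D_{z_2}$ commuting with $\sigma$ itself. A direct BCH calculation then gives $u_i\, e^{\tfrac{h}{2i}\sigma(D_u, D_{z_2})} = e^{\tfrac{h}{2i}\sigma(D_u, D_{z_2})} (u_i - \tfrac{h}{2} L_i(D_{z_2}))$ for appropriate linear forms $L_i$. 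Iterating and evaluating at $u = 0$ kills every binomial term except the pure $D_{z_2}$ one, yielding
\[
e^{\tfrac{h}{2i}\sigma(D_u, D_{z_2})} [u^\alpha r_\alpha(u, z) b(z_2)] \big|_{u = 0} = C_\alpha\, h^N\, e^{\tfrac{h}{2i}\sigma(D_u, D_{z_2})} [r_\alpha(u, z)\, \partial_{z_2}^{\alpha'} b(z_2)]\big|_{u = 0}
\]
for constants $C_\alpha$ and multi-indices $|\alpha'| = N$ (using that $r_\alpha$ is independent of $z_2$). The inner function lies in $S \otimes S^{L^1}_\rho$ with an extra factor of $h^{-\rho N}$ from the $N$ derivatives on $b$; applying the preceding lemma and combining with the $h^N$ prefactor produces the claimed remainder in $h^{N(1 - \rho)} S^{L^1}_\rho$.

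The main obstacle is the final bookkeeping step — verifying that restriction to the diagonal $z_1 = z_2 = z$ (which collapses the parameter $z$ with the variable $z_2$) actually yields an $S^{L^1}_\rho$ symbol in the single variable $z$, rather than just a function satisfying the separated-variable bounds from $S \otimes S^{L^1}_\rho$. Since the $z$-dependence of $r_\alpha$ only introduces bounded factors, this reduces to a chain rule and Fubini argument, but one must carefully track how $z$-derivatives of the final symbol distribute across the $L^1$-controlled $z_2$-derivatives and the bounded parametric $z$-derivatives.
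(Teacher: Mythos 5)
Your proposal is correct, but it follows a genuinely different route from the paper. The paper Taylor-expands the \emph{exponential operator} rather than the symbol $a$: writing $e^{ihA(D)}$ (with $A(D) = \sigma(D_z, D_w)$) as its degree-$(N-1)$ Taylor polynomial plus the integral remainder
\[
r_N = \frac{1}{(N-1)!}\int_0^1 (1-t)^{N-1}\, e^{ithA(D)}\,(ihA(D))^N\bigl(a(z)b(w)\bigr)\big|_{z=w}\,dt.
\]
Since each factor of $A(D)$ places one derivative on $a$ and one on $b$, the $N$-fold power immediately puts $N \ge k$ derivatives on $a$ (bringing it into $S$, hence into the hypotheses of the preceding lemma) and $N$ derivatives on $b$ (producing the $h^{-N\rho}$ loss); the preceding lemma, applied uniformly in $t \in [0,1]$, plus a one-line diagonal-restriction inequality, yield $r_N \in h^{N(1-\rho)}S^{L^1}_\rho$. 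You instead Taylor-expand $a$ around the output point $z$, observe that the polynomial part reproduces the Moyal truncation exactly, and handle the remainder by a BCH conjugation that trades the monomial $(z_1-z)^\alpha$ for $h^N$ times a degree-$N$ constant-coefficient operator in $D_{z_2}$. Both routes are valid and rest on the same two ingredients --- the $S\otimes S^{L^1}_\rho$ boundedness of $e^{ihQ(D)}$ and the diagonal-restriction $L^1$ estimate --- but yours carries an extra variable $z$ (the Taylor center) through the computation, which is exactly what causes the bookkeeping wrinkle you flag at the end. That wrinkle is resolvable: $\partial_z$ commutes with $e^{\tfrac{h}{2i}\sigma(D_u,D_{z_2})}$ (the operator is $z$-independent), and since $r_\alpha$ together with all its $(u,z)$-derivatives is bounded uniformly in $h$ (using $N\ge k$ and $a\in S_{(k)}$), one may take parametric $z$-derivatives before applying the preceding lemma and then estimate the diagonal restriction exactly as in the two-variable case. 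The paper's approach is shorter precisely because it never introduces that third variable.
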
 
It will also be clear from the proof that the analogous result holds for $\Op_h^\w(b) \Op_h^\w(a)$.
\begin{proof}
Put $z = (x, \xi)$ and $w = (y, \eta)$. We have
\begin{equation}
    \Op_h^\w(a) \Op_h^\w(b) = \Op_h^\w(c), \qquad c(z) := e^{ih A(D_{z, w})} a(z) b(w)|_{z = w}
\end{equation}
where $A(D_{z, w}) := \sigma(D_x, D_\xi, D_y, D_\eta)$. Then $c$ has the asymptotic development 
\begin{equation}
    c(z) = \sum_{\ell = 0}^{N - 1} \frac{1}{\ell!}(ihA(D))^\ell (a(z) b(w))|_{z = w} + r_N(z)
\end{equation}
where
\begin{equation}\label{eq:r_N}
    r_N(z) := \frac{1}{(N - 1)!} \int_0^1 (1 - t)^{N - 1} e^{ithA(D)}(ihA(D))^N (a(z) b(w))|_{z = w} \, dt.
\end{equation}
Note that for $N \ge k$,
\[A(D)^N (a(z) b(w)) \in h^{-N \rho} S \otimes S^{L^1}_\rho.\]
We also have that $e^{ihtA(D)}: S \otimes S^{L^1}_\rho \to S \otimes S^{L^1}_\rho$ is uniformly bounded for $t \in (0, 1)$. For any $e \in S \otimes S^{L^1}_\rho$, we have
\begin{equation}
    \|\partial_w^\alpha e(w, w)|\|_{L^1} \le \sum_{|\beta| \le |\alpha|} \|\sup_z|\partial_{(z, w)}^\beta e(z, \bullet)|\|_{L^1}.
\end{equation}
Therefore, we see from~\eqref{eq:r_N} that indeed $r_N \in h^{N(1 - \rho)} S^{L^1}_\rho$. 
\end{proof}

\section{Fokker--Planck parametrix in the small diffusion limit}\label{sec:param}
The classical dynamics is described by the Fokker--Planck equation~\eqref{eq:FP_def}. Under our assumptions~\eqref{eq:PL_assumptions} and~\eqref{eq:jump_elliptic} for $p$ and $\ell_j$, the Fokker--Planck equation is a second-order parabolic equation on $\R^{2n}$ whose diffusion coefficient tends to zero. In the following two sections, we independently study the $L^1$ properties of such equations. We consider the $\epsilon$-dependent operator
\[Q = \epsilon^2 \nabla \cdot A(x) \nabla + v(x) \cdot \nabla, \qquad x \in \R^n,\]
where $A \in C^\infty(\R^n; \mathrm{Sym}_{n \times n}(\R))$ and $v \in C^\infty(\R^n; \R^n)$ satisfy the following conditions:
\begin{gather}
    c \le A \le c^{-1} \label{eq:uniform_ell_est}\\
    \partial_x^\alpha A_{jk}(x) \le C_{\alpha, j, k} \langle x \rangle^{-|\alpha|} \quad \text{for all} \quad \alpha \label{eq:A_est}\\
    \partial_x^\alpha v(x) \le C_{\alpha} \quad \text{for all} \quad |\alpha| \ge 1 \label{eq:v_est} \\
    \nabla \cdot v = 0 \label{eq:divfree}
\end{gather}
Most importantly, we note that the Fokker--Planck operator we introduced in~\eqref{eq:FP_def} under the assumptions~\eqref{eq:PL_assumptions} and~\eqref{eq:jump_elliptic} are of this form for $\epsilon = \sqrt{\gamma h/2}$. The first condition~\eqref{eq:uniform_ell_est} gives us uniform ellipticity of the principal term. Conditions~\eqref{eq:A_est} and~\eqref{eq:v_est} essentially say that changes in $A$ occur on the same scale as the dynamics given by the vector field $v$; the dynamics is faster near infinity, and we need $A(x)$ to be comparable to $A(\varphi^{-1}(x))$ where $\varphi^t$ is the flow map of $v$. Finally, condition \eqref{eq:divfree} guarantees conservation of mass. For the parametrix construction in this section, condition~\eqref{eq:divfree} is not needed, but it is important for the following section in obtaining $L^1$ estimates. 

We consider the Cauchy problem
\begin{equation}\label{eq:cauchy_problem}
    \begin{cases}
        (\partial_t - Q)u = 0\\
        u(x, 0) = u_0(x)
    \end{cases}
\end{equation}
Our goal is to have long time control of the $L^1$-norms of spatial derivatives of the solution in terms of the initial data. This control is established Proposition \ref{prop:short_time}, which gives Sobolev estimates up to constant time, and in Proposition \ref{prop:smoothing}, which gives $\epsilon$-semiclassical smoothing estimates past constant time. To establish either of these estimates, we need a good parametrix.

The idea for our parametrix construction is based on the standard parabolic parametrix constructed in~\cite{Levi07}. The key difference is that in the small diffusion limit as $\epsilon \to 0$, the dynamics dominates over the diffusion. In order to obtain a good parametrix uniformly in $\epsilon$, the parametrix must follow the dynamics. 

\subsection{First approximation} Let $\varphi^t$ denote the flow generated by the vector field $v(x)$. Define
\begin{equation}
    K_0(x, y, t) := c_n (\epsilon^2 t)^{-n/2} \det A(x)^{-\ha} \exp \left(-\frac{\langle x - \varphi^{-t}(y), A^{-1}(x)(x - \varphi^{-t}(y))\rangle}{4\epsilon^2 t} \right)
\end{equation}
where $c_n := (4 \pi)^{-n/2}$. This choice of normalization ensures that 
\begin{equation*}
    K_0(x, y, t) \to \delta_0(x - y) \quad \text{in distributions as} \quad t \to 0^+. 
\end{equation*}
Define
\begin{equation}\label{eq:R1_def}
    R_1(x, y, t) := -(\partial_t - Q) K_0(x, y, t).
\end{equation}
We have the following estimate on $R_1$. 
\begin{lemma}
For $\epsilon, t \in (0, 1]$, $R_1(x, y, t)$ satisfies the pointwise bound 
\begin{equation}\label{eq:R1_pointwise}
    |R_1(x, y, t)| \le C (\epsilon^2 t)^{-\frac{n}{2}}(1 + \epsilon t^{-\ha}) \exp \left(- c \frac{|x - \varphi^{-t}(y)|^2}{\epsilon^2 t} \right)
\end{equation}
where $C, c > 0$ are independent of $\epsilon > 0$. This also implies the $L^1$ estimate
\begin{equation}
    \|R_1(\bullet, y, t)\|_{L^1_x} \le C(1 + \epsilon t^{-\ha}).
\end{equation}    
\end{lemma}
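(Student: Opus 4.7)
The strategy is to compute $R_1 = -(\partial_t - Q) K_0$ directly. Writing
\[
K_0(x, y, t) = c_n (\epsilon^2 t)^{-n/2} \rho(x) e^{\Phi(x, y, t)}, \qquad \rho(x) := \det A(x)^{-\ha}, \qquad \Phi := -\frac{\langle \eta, B(x) \eta\rangle}{4 \epsilon^2 t},
\]
with $\eta = \eta(x, y, t) := x - \varphi^{-t}(y)$ and $B := A^{-1}$, the Ansatz is morally the heat kernel of the frozen operator $\epsilon^2 \nabla \cdot A(x) \nabla$ transported along the characteristics of $v$. It is designed precisely so that the leading $1/t^2$ and $1/t$ contributions in $(\partial_t - Q) K_0$ cancel, leaving only residuals controlled by (a) the Lipschitz difference $v(x) - v(\varphi^{-t}(y))$, and (b) the $x$-dependence of $A$.

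The first step is to collect the leading terms. From $\partial_t K_0$, the prefactor gives $-(n/2t)\, K_0$, and using $\partial_t \eta = v(\varphi^{-t}(y))$ the exponent contributes $\langle \eta, B(x) \eta\rangle/(4\epsilon^2 t^2)\, K_0 \;-\; \langle \eta, B(x)\, v(\varphi^{-t}(y))\rangle/(2\epsilon^2 t)\, K_0$. From $\epsilon^2 \nabla \cdot A \nabla K_0$ the two most singular pieces come from $\epsilon^2 \rho\, A \nabla \Phi \cdot \nabla \Phi$ and $\epsilon^2 \rho\, A_{ij} \partial_i \partial_j \Phi$; using $AB = I$ and $\operatorname{tr}(AB) = n$, these produce exactly $\langle \eta, B(x) \eta\rangle/(4\epsilon^2 t^2)\, K_0$ and $-(n/2t)\, K_0$, cancelling the two most singular contributions from $\partial_t K_0$. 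The transport term $v(x) \cdot \nabla K_0$ contributes $-\langle B(x) \eta, v(x)\rangle/(2\epsilon^2 t)\, K_0$ at leading order, which combines with the analogous piece from $\partial_t K_0$ to yield $\langle B(x)\eta,\, v(x) - v(\varphi^{-t}(y))\rangle/(2\epsilon^2 t)\, K_0$.

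The remaining residuals split into two groups. The transport error above is bounded, via the Lipschitz estimate $|v(x) - v(\varphi^{-t}(y))| \le C|\eta|$ from~\eqref{eq:v_est} and the uniform bound $B \le c^{-1} I$ from~\eqref{eq:uniform_ell_est}, by $C |\eta|^2/(\epsilon^2 t)\, K_0$; absorbing $|\eta|^2/(\epsilon^2 t)$ into a slightly weaker Gaussian yields the $O(1)$ part of~\eqref{eq:R1_pointwise}. The $A$-variation terms arise from $\nabla \rho$ and $\nabla B$ acting inside $\epsilon^2\nabla \cdot A \nabla K_0$; each such term carries an explicit $\epsilon^2$ multiplied by at most one leading factor $|\eta|/(\epsilon^2 t)$ from $\nabla \Phi$, producing contributions of size $(|\eta|/t)\, K_0 = (\epsilon/\sqrt{t}) \cdot |\eta|/(\epsilon\sqrt{t}) \cdot K_0$. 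Absorbing $|\eta|/(\epsilon \sqrt{t})$ into the Gaussian leaves the $\epsilon t^{-1/2}$ prefactor. Uniform ellipticity~\eqref{eq:uniform_ell_est} together with~\eqref{eq:A_est} makes all coefficients (and their derivatives) bounded uniformly in $x$. Finally, integrating the pointwise bound in $x$ with the change of variables $u = \eta$ (Jacobian $1$ for fixed $y, t$) reduces the $L^1$ estimate to a Gaussian integral contributing $(\epsilon^2 t)^{n/2}$, which exactly cancels the prefactor.

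The main obstacle is bookkeeping: expanding $\epsilon^2 \nabla \cdot A \nabla K_0$ by the product rule produces many terms mixing derivatives of $A$, $\rho$, and $\Phi$ at various orders, and one must verify that the $1/t^2$ and $-n/(2t)$ singular pieces cancel \emph{exactly}, while every surviving term either has no explicit singular factor (giving $O(1)$) or carries an $\epsilon^2$ multiplied by at most one factor of $|\eta|/(\epsilon^2 t)$ (giving $O(\epsilon/\sqrt{t})$). Organizing the expansion and tracking powers of $\epsilon$ and $t^{-1}$ is essentially the whole content of the proof.
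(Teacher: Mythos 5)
Your proposal is correct and follows essentially the same route as the paper's proof: compute $(\partial_t - Q)K_0$ directly, verify the exact cancellation of the $\langle\eta, A^{-1}\eta\rangle/(4\epsilon^2 t^2)$ and $n/(2t)$ pieces via $AA^{-1}=I$, bound the transport residual by the Lipschitz estimate \eqref{eq:no_warping}, and absorb the remaining $A$-variation residuals into a slightly weaker Gaussian. The paper collects those residuals as homogeneous symbols $\mathcal P_j \in S^j_{\hom}$ in \eqref{eq:R_1_expansion}, a bookkeeping device that is reused for the derivative bounds of Lemma~\ref{lem:R1_est}; note also that your description ``$\epsilon^2$ times at most one factor of $|\eta|/(\epsilon^2 t)$'' is slightly too narrow (e.g.\ $\epsilon^2\langle A\nabla\Phi,\nabla\Phi\rangle$ also contributes cubic and quartic $\eta$-polynomials divided by $\epsilon^2 t^2$), but since each power of $|\eta|$ trades for $(\epsilon^2 t)^{1/2}$ upon Gaussian absorption, these terms are still at worst $O(\epsilon/\sqrt t)$ and the conclusion is unaffected.
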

Note that this estimate is uniformly $\mathcal O(t^{-\ha})$ in $\epsilon$ for $\epsilon, t \in (0, 1]$.
\begin{proof}
The proof follows more or less by a direct computation, which we must carry out in some detail. The time derivative of $K_0$ given by 
\begin{align}
    \frac{\partial_t K_0(x, y, t)}{K_0(x, y, t)} &= -\frac{n}{2} t^{-1} - \partial_t \left( \frac{ \langle x - \varphi^{-t}(y), A^{-1}(x)(x - \varphi^{-t}(y)) \rangle}{4\epsilon^2 t} \right) \nonumber \\
    &= - \frac{n}{2} t^{-1} + \frac{ \langle x - \varphi^{-t}(y), A^{-1}(x)(x - \varphi^{-t}(y)) \rangle}{4\epsilon^2 t^2} \nonumber \\
    &\quad \quad - \frac{\langle v(\varphi^{-t}(y)), A^{-1}(x)(x - \varphi^{-t}(y)) \rangle}{2\epsilon^2 t}. \label{eq:time}
\end{align}
The transport term is given by 
\begin{multline}\label{eq:transport}
    \frac{v(x) \cdot \nabla K_0(x, y, t)}{K_0(x, y, t)} = - \frac{1}{2}\frac{v(x) \cdot \nabla \det A(x)}{\det A(x)}  - \frac{\langle v(x), A^{-1}(x)(x - \varphi^{-t}(y)) \rangle}{2\epsilon^2 t} \\
    - \frac{\langle x - \varphi^{-t}(y), (v(x) \cdot \nabla A^{-1}(x))(x - \varphi^{-t}(y)) \rangle}{4\epsilon^2 t}.
\end{multline}
Observe that by~\eqref{eq:uniform_ell_est}, \eqref{eq:A_est}, and~\eqref{eq:v_est}, the first term on the right-hand-side of~\eqref{eq:transport} has uniformly bounded derivatives:
\begin{equation*}
    \partial_x^\alpha \left(\frac{1}{2}\frac{v(x) \cdot \nabla \det A(x)}{\det A(x)} \right) \le C_\alpha
\end{equation*}
for all $\alpha \in \N_0^n$. 
Finally, we consider the diffusive term. First, observe that 
\begin{multline*}
    \frac{ A(x) \nabla K_0(x, y, t)}{K_0(x, y, t)} =\det A(x)^{\ha} A(x) \nabla(\det A(x)^{-\ha}) \\
    - \frac{x - \varphi^{-t}(y)}{2\epsilon^2 t} - \frac{\langle x - \varphi^{-t}(y), \nabla A^{-1}(x)(x - \varphi^{-t}(y)) \rangle}{4\epsilon^2 t}.
\end{multline*}
It follows from~\eqref{eq:uniform_ell_est} and~\eqref{eq:A_est} that
\begin{equation*}
    \left|\partial_x^\alpha \left( \det A(x)^{\ha} A(x) \nabla(\det A(x)^{-\ha}) \right) \right| \le C_\alpha
\end{equation*}
for all $\alpha \in \N_0^n$. Then taking the divergence, we see that the diffusion term takes the form
\begin{align}
    \frac{\epsilon^2 \nabla \cdot A(x)\nabla K_0(x, y, t)}{K_0(x, y, t)} &= -\frac{n}{2} t^{-1} + \frac{\langle x - \varphi^{-t}(y), A^{-1}(x)(x - \varphi^{-t}(y)) \rangle}{4\epsilon^2 t^2} + \epsilon^2 \mathcal P_0(x)\nonumber \\
    &\qquad +\frac{ \mathcal P_1(x, x - \varphi^{-t}(y))}{t} + \frac{\mathcal P_2(x, x - \varphi^{-t}(y))}{t} \nonumber \\
    &\qquad \qquad + \frac{\mathcal P_3(x, x - \varphi^{-t}(y))}{\epsilon^2 t^2} + \frac{\mathcal P_4(x, x - \varphi^{-t}(y))}{\epsilon^2 t^2} \label{eq:diffusion}
\end{align}
where $\mathcal P_j(x, \eta) \in S_{\hom}^j(\R^n \times \R^n)$ is a homogeneous symbol of order $j$, that is, 
\begin{equation}
    \mathcal P_j(x, \eta) \in S_{\hom}^j(\R^n \times \R^n) \iff P_j(x, \eta) = \sum_{|\alpha| = j} a_\alpha(x) \eta^\alpha, \quad |\partial_x^\beta a_\alpha| \le C_{\alpha, \beta}
\end{equation}
The fact that $\mathcal P_j$ satisfies symbol estimates follows from uniform ellipticity~\eqref{eq:uniform_ell_est} and the symbol estimates \eqref{eq:A_est}. Although it is not the case for $\mathcal P_j$ here, we will also allow the coefficients $a_\alpha$ of members of $S^j_{\hom}$ to depend on $\epsilon$ but satisfies symbol estimates uniformly in $\epsilon$. It is convenient to record the property 
\begin{equation}\label{eq:Pj_derivatives}
    \begin{gathered}
        \partial_x \mathcal P_j(x, x - \varphi^{-t}(y)) = \mathcal Q_j(x, x - \varphi^{-t}(y)) + \mathcal Q_{j - 1}(x, x - \varphi^{-t}(y)) \\
        \partial_x \mathcal P_j(x, y) = \mathcal Q_j(x, y) \\
        \partial_y \mathcal P_j(x + \varphi^{-t}(y), x) = \mathcal Q_j(x + \varphi^{-t}(y), x)
    \end{gathered}
\end{equation}
where $\mathcal Q_j \in S^j_{\hom}(\R^n \times \R^n)$ and may change from line to line. 

Combining~\eqref{eq:time}, \eqref{eq:transport}, and~\eqref{eq:diffusion}, we see that
\begin{equation}\label{eq:R_1_expansion}
\begin{aligned}
    R_1(x, y, t) &= \Bigg[\frac{\langle v(x) - v(\varphi^{-t}(y)), A^{-1}(x)(x - \varphi^{-t}(y)) \rangle }{2\epsilon^2 t}  \\
    & \quad + \mathcal P_0(x) + \frac{ \mathcal P_1(x - \varphi^{-t}(y), x)}{t} + \frac{\mathcal P_2(x - \varphi^{-t}(y), x)}{\epsilon^2 t} \\
    & \quad \quad+ \frac{\mathcal P_3(x - \varphi^{-t}(y), x)}{\epsilon^2 t^2} + \frac{\mathcal P_4(x - \varphi^{-t}(y), x)}{\epsilon^2 t^2}\Bigg] K_0(x, y, t)
\end{aligned}
\end{equation}
where $\mathcal P_j \in S^j_{\hom}(\R^n \times \R^n)$ (and are possibly different from the $\mathcal P_j$ from~\eqref{eq:diffusion}). It follows from \eqref{eq:v_est} that for all $t \in [0, 1]$, 
\begin{equation}\label{eq:no_warping}
    |v(x) - v(\varphi^{-t}(y))| \le C|x - \varphi^{-t}(y)|.
\end{equation}
The pointwise estimate then follows upon applying~\eqref{eq:no_warping} to the first term on the right-hand-side of~\eqref{eq:R_1_expansion}.
\end{proof}
We emphasize that the pointwise estimate~\eqref{eq:R1_pointwise} holds uniformly in $\epsilon$ precisely because $K_0$ is roughly a Gaussian that follows the dynamics generated by the vector field $v(x)$. The standard parabolic parametrix construction does not follow the dynamics generated by the sub-leading order term, and thus one cannot expect to obtain uniform estimates in $\epsilon$ from that construction.

\subsection{Higher order corrections} 
Now we proceed to make corrections to $K_0$ to improve the parametrix. The idea is the same as in the standard parabolic parametrix construction. Assume that the fundamental solution to \eqref{eq:cauchy_problem} is of the form 
\begin{equation*}
    K(x, y, t) = K_0(x, y, t) + \int_0^t \int_{\R^n} K_0(x, z, t - s) R(z, y, s)\, ds.
\end{equation*}
Then we see that 
\begin{equation*}
    R(x, y, t) = R_1(x, y, t) + \int_0^t \int_{\R^n} R_1(x, z, t - s) R(z, y, s)\, ds.
\end{equation*}
Define
\begin{equation}\label{eq:Rk_def}
    R_k := \int_0^t \int_{\R^n} R_1(x, z, t - s) R_{k - 1}(z, y, s)\, ds.
\end{equation}
Then at least formally, we have that
\begin{equation*}
    R = \sum_{k = 1}^\infty R_k
\end{equation*}
A good candidate for an improved parametrix is then given by 
\begin{equation}\label{eq:Kj_def}
    K_j := K_0(x, y, t) + \sum_{k = 1}^j \int_0^t \int_{\R^n} K_0(x, z, t - s) R_j(z, y, s)\, ds
\end{equation}
In particular, we see that 
\begin{equation}
    (\partial_t - Q) K_j(x, y, t) = -R_1 + \sum_{k = 1}^j(R_k - R_{k + 1})= -R_{j + 1}
\end{equation}
To justify that this is a parametrix, we must estimate $R_j$ as well as its derivatives. We first need derivative estimates on $R_1$. 
\begin{lemma}\label{lem:R1_est}
Let $R_1(x, y, t)$ be as defined in \eqref{eq:R1_def}. For $\epsilon, t \in (0, 1]$, $R_1$ satisfies the pointwise estimates
\begin{equation}\label{eq:dxR1}
    |\partial_x^\alpha R_1(x, y, t)| \le C_\alpha (\epsilon^2 t)^{-\frac{n + |\alpha|}{2}} t^{-\ha} \exp \left(- c_\alpha \frac{|x - \varphi^{-t}(y)|^2}{\epsilon^2 t} \right),
\end{equation}
\begin{equation}\label{eq:dxR1'}
    |\partial_x^\alpha R_1(x + y, \varphi^{t}(x), t)| \le C_\alpha(\epsilon^2 t)^{-\frac{n}{2}} t^{-\ha} \exp \left(-c_\alpha \frac{|y|^2}{\epsilon^2 t}\right),
\end{equation}
where $C_\alpha, c_\alpha > 0$ are independent of $\epsilon$ and $t$.   
\end{lemma}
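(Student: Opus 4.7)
The starting point is the explicit representation~\eqref{eq:R_1_expansion}, which writes $R_1 = \mathcal B(x,y,t)\, K_0(x,y,t)$ with a bracket $\mathcal B$ that is polynomial in $w := x - \varphi^{-t}(y)$ and whose coefficients are smooth in $x$ with uniformly bounded $x$-derivatives (this is where \eqref{eq:A_est}, \eqref{eq:v_est}, and \eqref{eq:Pj_derivatives} all enter). The workhorse throughout is the Gaussian absorption bound
\begin{equation*}
    |w|^k e^{-c|w|^2/(\epsilon^2 t)} \le C_{k,c,c'} (\epsilon^2 t)^{k/2} e^{-c'|w|^2/(\epsilon^2 t)}, \qquad 0 < c' < c,
\end{equation*}
which trades polynomial weight for $(\epsilon^2 t)^{k/2}$.

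For~\eqref{eq:dxR1}, I would apply Leibniz. A direct calculation gives $\partial_x \log K_0 = O(1) + O(|w|/(\epsilon^2 t)) + O(|w|^2/(\epsilon^2 t))$, using $\partial_x w = \Id$ together with boundedness of the $x$-derivatives of $\log\det A(x)$ and $A^{-1}(x)$. By induction, $\partial_x^\alpha K_0 = K_0 \sum_{k \le |\alpha|} P_k(x, w)(\epsilon^2 t)^{-k}$ where each $P_k$ is a polynomial of degree at most $2k$ in $w$ with $x$-derivatives uniformly bounded in $\epsilon$. Differentiating the bracket $\mathcal B$ preserves its polynomial-in-$w$ structure with bounded coefficients thanks to~\eqref{eq:Pj_derivatives}. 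After the Leibniz assembly and Gaussian absorption on all powers of $|w|$, the worst term costs $(\epsilon^2 t)^{-|\alpha|/2}$. Combined with the pre-existing estimate $|\mathcal B| \le C(1 + \epsilon t^{-\ha}) \le C t^{-\ha}$ (using $\epsilon, t \in (0, 1]$), this yields~\eqref{eq:dxR1}.

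For~\eqref{eq:dxR1'}, the key observation is that under the substitution $(x, y) \mapsto (x+y, \varphi^t(x))$ one has $w = (x+y) - \varphi^{-t}(\varphi^t(x)) = y$, which is \emph{independent of the differentiation variable $x$}. The Gaussian exponent becomes $-\langle y, A^{-1}(x+y) y\rangle/(4\epsilon^2 t)$, and its $\partial_x$-derivatives produce only terms of the form $O(|y|^2/(\epsilon^2 t))$ via $\partial_x A^{-1}(x+y)$, which absorb into the Gaussian at \emph{no cost} in powers of $(\epsilon^2 t)^{-\ha}$. The first term of $\mathcal B$ becomes $\langle v(x+y) - v(x), A^{-1}(x+y) y\rangle/(2\epsilon^2 t)$; using the Taylor expansion $v(x+y) - v(x) = y \cdot g(x,y)$ with $g$ smooth and with bounded derivatives by~\eqref{eq:v_est}, this is $|y|^2/(\epsilon^2 t)$ times a smooth bounded factor whose $x$-derivatives preserve the form. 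The $\mathcal P_j(y, x+y)$ terms likewise retain their $y$-polynomial structure under $\partial_x$. Finally, the chain rule contributes the Jacobian $D\varphi^t(x)$ from differentiating the second argument of $R_1$, and this and its higher derivatives are bounded for $t \in (0,1]$ by~\eqref{eq:v_est}. Assembling everything gives the bound $C_\alpha t^{-\ha} (\epsilon^2 t)^{-n/2} e^{-c'|y|^2/(\epsilon^2 t)}$.

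\textbf{Main obstacle.} The delicate point is the bookkeeping for~\eqref{eq:dxR1'}: verifying that \emph{no} $\partial_x$-derivative produces a factor with a negative power of $\epsilon^2 t$ unpaired with a compensating power of $|y|^2$. This hinges on three interlocking facts: (i) in the dynamical frame $w = y$ is $x$-independent, so $\partial_x$ never generates the most singular factor $|w|/(\epsilon^2 t)$; (ii) $\partial_x A^{-1}(x+y)$ is bounded, so differentiating the quadratic form $\langle y, A^{-1}(x+y) y\rangle$ always leaves two $y$'s available for absorption; (iii) $v(x+y) - v(x) = O(|y|)$ supplies the $|y|^2$ pair in the leading bracket term. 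These are precisely the ingredients making the dynamics-following parametrix uniform in the small-diffusion limit $\epsilon \to 0$.
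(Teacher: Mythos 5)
Your proof follows essentially the same route as the paper: expand $R_1 = \mathcal B\,K_0$ using the representation~\eqref{eq:R_1_expansion}, estimate $\partial_x^\alpha K_0$ by a polynomial-times-Gaussian expansion and absorb, and for~\eqref{eq:dxR1'} exploit that in the dynamical frame $w=y$ is independent of $x$. The three ``interlocking facts'' you isolate for~\eqref{eq:dxR1'} are exactly the observations the paper uses (via its computations of $\partial_x^\alpha K_0(x+y,\varphi^t(x),t)$, of $\partial_x^\alpha\langle v(x+y)-v(x),A^{-1}(x+y)y\rangle$, and the property~\eqref{eq:Pj_derivatives}).

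One small bookkeeping imprecision worth noting in your argument for~\eqref{eq:dxR1}: you write $\partial_x^\alpha K_0 = K_0\sum_{k\le|\alpha|}P_k(x,w)(\epsilon^2 t)^{-k}$ with $\deg_w P_k\le 2k$, and then conclude that Gaussian absorption gives at worst $(\epsilon^2 t)^{-|\alpha|/2}$. But the \emph{upper} bound on $\deg P_k$ is not what controls this; a low-degree monomial in $P_{|\alpha|}$ (e.g.\ a constant) paired with $(\epsilon^2 t)^{-|\alpha|}$ would give $(\epsilon^2 t)^{-|\alpha|}$, which is far worse. What you actually need, and what the induction does produce, is the complementary \emph{lower} bound: every monomial $w^j(\epsilon^2 t)^{-m}$ appearing in the expansion satisfies $j - 2m \ge -|\alpha|$ (each $x$-derivative either drops the $w$-degree by one, or increases $m$ by one while adding at least one power of $w$). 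This invariant is the real content of the claim; without recording it, the stated constraint does not imply your conclusion. The paper organizes this via its $S^j_{\mathrm{hom}}$ bookkeeping and~\eqref{eq:Pj_derivatives}. This is an omission in the writeup rather than a conceptual gap, and it does not affect the correctness of your final bounds.
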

We see from \eqref{eq:dxR1} that higher derivatives in $x$ leads to faster rate of blow up as $\epsilon^2 t \to 0$. On the other hand, the purpose of \eqref{eq:dxR1'} is to show that trading $x$ derivatives of $R_1$ for $y$ derivatives is not too costly, since \eqref{eq:dxR1'} does not become worse with more derivatives. We also remark that it will also be useful to rearrange \eqref{eq:dxR1'} as 
\begin{equation}\label{eq:dyR1'}
    |\partial_y^\alpha R_1(x + \varphi^{t}(y), y, t)| \le C_\alpha(\epsilon^2 t)^{-\frac{n}{2}} t^{-\ha} \exp \left(-c_\alpha \frac{|x|^2}{\epsilon^2 t}\right).
\end{equation}
\begin{proof}
1. We take derivatives of the expansion for $R_1$ given in~\eqref{eq:R_1_expansion}. We first consider derivatives hitting $K_0$. By conditions~\eqref{eq:uniform_ell_est} and~\eqref{eq:A_est}, for any $\alpha \in \N_0^n$, there exists $\mathcal P_j \in S^j_\hom(\R^n \times \R^n)$ such that 
\begin{equation}
    \partial^\alpha_x K_0(x, y, t) = \sum_{j = 0}^{|\alpha|} \frac{\mathcal P_j(x, x - \varphi^{-t}(y))}{(\epsilon^2 t)^{\frac{|\alpha| + j}{2}}} K_0(x, y, t).
\end{equation}
Here, $\mathcal P_j$ may depend on $\epsilon, t \in (0, 1]$ but lie in the symbol class $S^j_\hom(\R^n \times \R^n)$ uniformly. Then it follows that 
\begin{equation}\label{eq:K0_derivatives}
    |\partial_x^\alpha K_0(x, y, t)| \le C_\alpha (\epsilon^2 t)^{-\frac{n + |\alpha|}{2}} \exp\left(-c_\alpha \frac{|x - \varphi^{-t}(y)|^2}{\epsilon^2 t}\right)
\end{equation}
for some $c_\alpha > 0$. 

We also need to control derivatives of the first term of on the right-hand-side of \eqref{eq:R_1_expansion}. For this, we compute
\begin{align*}
    &\partial_{x_\ell} \langle v(x) - v(\varphi^{-t}(y)), A^{-1}(x)(x - \varphi^{-t}(y)) \rangle \\
    =& \langle v(x) - v(\varphi^{-t}(y)), (\partial_{x_\ell}A^{-1})(x)(x - \varphi^{-t}(y)) \rangle \\
    &\quad + \langle (\partial_{x_\ell} v)(x), A^{-1}(x)(x - \varphi^{-t}(y)) \rangle + \langle v(x) - v(\varphi^{-t}(y)), A^{-1}(x)\delta_\ell \rangle
\end{align*}
where $\delta_\ell$ is the vector with a $1$ in the $\ell$-th component and zeros elsewhere. Higher derivatives take a similar form, and it is easy to see that 
\begin{equation}\label{eq:quadform_derivatives1}
    \partial_x^\alpha \langle v(x) - v(\varphi^{-t}(y)), A^{-1}(x)(x - \varphi^{-t}(y)) \rangle \le C_\alpha(|x_k - \varphi^{-t}(y)|^2 + |x_k - \varphi^{-t}(y)|)
\end{equation}
for $|\alpha| = 1$
and
\begin{equation}\label{eq:quadform_derivatives2}
    \partial_x^\alpha \langle v(x) - v(\varphi^{-t}(y)), A^{-1}(x)(x - \varphi^{-t}(y)) \rangle \le C_\alpha(|x_k - \varphi^{-t}(y)|^2 + |x_k - \varphi^{-t}(y)| + 1)
\end{equation}
for $|\alpha| \ge 2$. 
The derivative estimates \eqref{eq:K0_derivatives}, \eqref{eq:quadform_derivatives1} and \eqref{eq:quadform_derivatives2} combined with the symbol derivative estimates \eqref{eq:Pj_derivatives} yields the estimate \eqref{eq:dxR1}. 

\noindent
2. Now we estimate $R_1(x + y, \varphi^t(x), t)$. Similarly, we first consider $K_0(x + y, \varphi^t(x), t)$. It is easy to see that 
\begin{equation}
    \partial_x^\alpha K_0(x + y, \varphi^{t}(x), t) \le C_\alpha (\epsilon^2 t)^{-\frac{n}{2}} \exp\left(- c_\alpha \frac{|y|^2}{\epsilon^2 t} \right).
\end{equation}
Next, we compute
\begin{equation}
    \partial_{x_\ell}^\alpha \langle v(x + y) - v(x), A^{-1}(x + y) y \rangle
    = \sum_{\beta = 0}^\alpha C_{\alpha, \beta} \langle \partial_x^\beta(v(x + y) - v(x)), \partial_x^{\alpha - \beta} A^{-1}(x + y) y \rangle,
\end{equation}
which yields 
\begin{equation}
    |\partial_{x_\ell}^\alpha \langle v(x + y) - v(x), A^{-1}(x + y) y \rangle| \le C_\alpha |y|^2.
\end{equation}
Combining with the symbol derivative estimates~\eqref{eq:Pj_derivatives} yields~\eqref{eq:dxR1'}.
\end{proof}

Now we estimate $R_k$ and its derivatives, which in turn allow us to estimate derivatives of $K_j$.
\begin{lemma}\label{lem:Rk_est}
    Let $R_k$ be the correction terms defined in~\eqref{eq:Rk_def} and $K_j$ be the parametrix defined in~\eqref{eq:Kj_def}. Then for $\epsilon, t \in (0, 1]$, $R_k$ satisfy the pointwise estimate 
    \begin{equation}\label{eq:Rk_est}
        |\partial_x^\alpha R_k(x, y, t)| \le C_{k, \alpha} (\epsilon^2 t)^{-\frac{n + |\alpha|}{2}} t^{\frac{k}{2} - 1} \exp \left(-c_{k, \alpha} \frac{|x - \varphi^{-t}(y)|^2}{\epsilon^2 t} \right),
    \end{equation}
    and $K_j$ satisfy the pointwise estimate
    \begin{equation}\label{eq:Kj_est}
        |\partial_x^\alpha K_j(x, y, t)| \le C_{j, \alpha} (\epsilon^2 t)^{-\frac{n + |\alpha|}{2}} \exp \left(-c_{j, \alpha} \frac{|x - \varphi^{-t}(y)|^2}{\epsilon^2 t} \right)
    \end{equation}
    where all constants are strictly positive and independent of $\epsilon$ and $t$. 
\end{lemma}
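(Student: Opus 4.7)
The proof is by induction on $k$ for the pointwise bound \eqref{eq:Rk_est} on $R_k$. Once this is established, the bound \eqref{eq:Kj_est} on $K_j$ follows from its definition \eqref{eq:Kj_def}: the derivative bounds on $K_0$ recorded at \eqref{eq:K0_derivatives} handle the leading term, and each of the finitely many remaining terms $\int_0^t \int K_0(x, z, t-s) R_k(z, y, s) \, dz \, ds$ is estimated by combining the $R_k$ bound just proved with the Gaussian convolution estimate described below and the convergent integral $\int_0^t s^{k/2 - 1} ds = O(t^{k/2}) = O(1)$. The base case $k = 1$ of \eqref{eq:Rk_est} is exactly Lemma \ref{lem:R1_est}.

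The central technical ingredient is a Chapman--Kolmogorov-type bound for Gaussian kernels adapted to the drift: there exist $c' \in (0, c)$ and $C > 0$ such that, uniformly in $\epsilon \in (0,1]$, $t \in (0,1]$, and $s \in (0, t)$,
\begin{equation*}
    \int_{\R^n} \frac{e^{-c|x - \varphi^{-(t-s)}(z)|^2/(\epsilon^2(t-s))}}{(\epsilon^2(t-s))^{n/2}} \cdot \frac{e^{-c|z - \varphi^{-s}(y)|^2/(\epsilon^2 s)}}{(\epsilon^2 s)^{n/2}} \, dz \le C (\epsilon^2 t)^{-n/2} e^{-c'|x - \varphi^{-t}(y)|^2/(\epsilon^2 t)}.
\end{equation*}
The group property $\varphi^{-t}(y) = \varphi^{-(t-s)}(\varphi^{-s}(y))$ together with the uniform Lipschitz bound $|\varphi^{-\tau}(z_1) - \varphi^{-\tau}(z_2)| \le L|z_1 - z_2|$ for $\tau \in [0,1]$ (a consequence of \eqref{eq:v_est}) gives $|x - \varphi^{-(t-s)}(z)| \ge L^{-1}|\varphi^{t-s}(x) - z|$; after substituting this into the first factor, both Gaussians in the integrand become standard (flow-free) Gaussians in $z$, and completing the square together with $|\varphi^{t-s}(x) - \varphi^{-s}(y)| \ge L^{-1}|x - \varphi^{-t}(y)|$ yields the claimed bound. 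Granting this, the $|\alpha| = 0$ inductive step is immediate: inserting the bounds on $R_1$ and $R_{k-1}$ into \eqref{eq:Rk_def} and applying the above reduces the problem to the beta integral $\int_0^t (t-s)^{-1/2} s^{(k-1)/2 - 1} ds = C t^{k/2 - 1}$, which matches \eqref{eq:Rk_est}.

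The main obstacle is the derivative case $|\alpha| \ge 1$, because naively differentiating $R_1(x, z, t-s)$ through \eqref{eq:dxR1} introduces a factor $(\epsilon^2(t-s))^{-|\alpha|/2}$, and the resulting $s$-integral $\int_0^t (t-s)^{-(1+|\alpha|)/2} s^{(k-1)/2 - 1} ds$ diverges at $s = t$. I handle this by splitting $[0, t]$ at $t/2$. On $[0, t/2]$ one has $t - s \ge t/2$, so $(\epsilon^2(t-s))^{-|\alpha|/2} \le C(\epsilon^2 t)^{-|\alpha|/2}$ and the $|\alpha| = 0$ argument carries through verbatim. On $[t/2, t]$ I perform the change of variables $z = \varphi^{t-s}(x - u)$, which casts $R_1(x, z, t-s)$ as the compound expression $R_1(\tilde x + u, \varphi^{t-s}(\tilde x), t-s)$ with $\tilde x = x - u$; since $u$ is held fixed the operator $\partial_x$ equals $\partial_{\tilde x}$, and \eqref{eq:dxR1'} bounds the $R_1$ derivatives pointwise with no $|\alpha|$-loss. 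The Leibniz terms in which derivatives fall instead on $R_{k-1}(\varphi^{t-s}(\tilde x), y, s)$ or on the uniformly bounded Jacobian $|\det D\varphi^{t-s}(\tilde x)|$ are controlled via the chain rule and the inductive hypothesis at order $|\beta| \le |\alpha|$, producing at worst $C(\epsilon^2 s)^{-(n + |\beta|)/2} s^{(k-1)/2 - 1} \le C(\epsilon^2 t)^{-(n + |\alpha|)/2} t^{(k-1)/2 - 1}$ on $[t/2, t]$. A standard Gaussian convolution in $u$ (now flow-free) reconstructs the Gaussian centered at $\varphi^{-t}(y)$, and the convergent integral $\int_{t/2}^t (t-s)^{-1/2} ds = O(t^{1/2})$ delivers the required $t^{k/2 - 1}$ factor, completing the induction.
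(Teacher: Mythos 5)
Your argument is correct and takes essentially the same route as the paper: induction on $k$, the $[0,t/2]$--$[t/2,t]$ split, the Chapman--Kolmogorov Gaussian bound (the paper's \eqref{eq:exp_integral}), and, on $[t/2,t]$, the key observation that after the substitution $z = \varphi^{t-s}(x-u)$ the costly first-slot $x$-derivative of $R_1$ becomes the composite derivative controlled by \eqref{eq:dxR1'}; your change-variables-then-Leibniz phrasing is algebraically equivalent to the paper's differentiate-then-integrate-by-parts at \eqref{eq:gaussian_IBP}. One small point to make explicit: for $|\alpha|\ge 2$ the $K_j$ bound also requires the same $[t/2,t]$ device rather than the bare integral $\int_0^t s^{k/2-1}\,ds$ you cite (which is the $\alpha=0$ computation), exactly as the paper signals by ``repeating Step 2 replacing $R_1$ with $K_0$.''
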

\begin{proof}
1. We proceed by induction. The base case estimates for $R_1(x, y, t)$ follows from Lemma~\ref{lem:R1_est}. Now assume the lemma holds for all $R_{j}$ with $j \le k - 1$, $k \ge 2$. By definition, 
\begin{equation}
\begin{aligned}
    R_k &= \partial_x^\alpha \int_0^t \int_{\R^n} R_1(x, z, t - s) R_{k - 1}(z, y, s)\, ds \\
    &= \int_0^{t/2} \int_{\R^n} R_1(x, z, t - s) R_{k - 1}(z, y, s)\, ds \\
    &\qquad \qquad + \int_{t/2}^t \int_{\R^n} R_1(x, z, t - s) R_{k - 1}(z, y, s)\, ds \\
    &=: R_{k1}(x, y, t) + R_{k2}(x, y, t).
\end{aligned}
\end{equation}
First, we estimate $R_{k1}$. It follows from the induction hypothesis and \eqref{eq:dxR1} that 
\begin{multline}\label{eq:R_j1}
    |\partial_x^\alpha R_{k1}(x, y, t)| \le C_\alpha \int_0^{t/2} \int_{\R^n} (\epsilon^2 (t - s))^{-\frac{n + \alpha}{2}} (\epsilon^2 s)^{-\frac{n}{2}} (t - s)^{-\ha} s^{\frac{k - 3}{2}} \\
    \exp\left(-c \frac{|x - \varphi^{-(t - s)}(z)|^2}{\epsilon^2(t - s)} \right) \exp\left(-c \frac{|z - \varphi^{-s}(y)|^2}{\epsilon^2s} \right)\, dz ds
\end{multline}
We first estimate the exponential term in the product inside the integrand. Observe that one of the exponential terms is localized near $\varphi^{t - s}(x)$ and the other is localized near $\varphi^{-s}(y)$. More precisely, for $0 \le s \le t \le 1$, we have
\begin{align}
    &\int_{\R^n} \exp \left( -c\frac{|x - \varphi^{-(t - s)}(z)|^2}{\epsilon^2(t - s)} \right) \exp \left(-c \frac{|z - \varphi^{-s}(y)|^2}{\epsilon^2 s} \right) dz \nonumber \\
    \le& \int_{\R^n} \exp \left( -c\frac{|\varphi^{t - s}(x) - z|^2}{\epsilon^2 (t - s)} \right) \exp \left(-c\frac{|z - \varphi^{-s}(y)|^2}{\epsilon^2 s} \right) dz \nonumber\\
    \le& \int_{\R^n} \exp \left( -c\frac{|(z - \varphi^{-s}(y)) - (\varphi^{t - s}(x) -\varphi^{s}(y))
    |^2}{\epsilon^2 (t - s)} \right) \exp \left(-c \frac{|z - \varphi^{-s}(y)|^2}{\epsilon^2 s} \right) dz \nonumber \\
    =& \exp \left(-c \frac{|x - \varphi^{-t}(y)|^2}{\epsilon^2 t} \right) \int_{\R^n} \exp\left(-c \left| \sqrt{\frac{t}{\epsilon^2 (t - s) s}} \tilde z - \sqrt{\frac{s}{\epsilon^2 (t - s) t}} w \right|^2 \right) d\tilde z \nonumber \\
    \le& C \left(\frac{t}{\epsilon^2 (t - s)s} \right)^{-\frac{n}{2}} \exp \left(-c \frac{|x - \varphi^{-t}(y)|^2}{\epsilon^2 t} \right) \label{eq:exp_integral}
\end{align}
where the constant $c > 0$ may change from line to line but remains independent of $\epsilon$ and $t$, and we made the substitutions $w := \varphi^{t - s}(x) - \varphi^s(y)$ and $\tilde z := z - \varphi^{-s}(y)$
Combining \eqref{eq:exp_integral} with \eqref{eq:R_j1}, we find that 
\begin{equation}\label{eq:Rk1_est}
\begin{aligned}
    |\partial_x^\alpha R_{k1}(x, y, t)| &\le C \exp \left(-c \frac{|x - \varphi^{-t}(y)|^2}{\epsilon^2 t} \right)\\
    &\qquad \int_0^{t/2} (\epsilon^2 t)^{-\frac{n}{2}} (\epsilon^2 (t - s))^{-\frac{|\alpha|}{2}} (t - s)^{-\ha} s^{\frac{k - 3}{2}} \, ds\\
    &\le C(\epsilon^2 t)^{-\frac{n + |\alpha|}{2}} t^{\frac{k}{2} - 1} \exp \left(-c \frac{|x - \varphi^{-t}(y)|^2}{\epsilon^2 t} \right).
\end{aligned}
\end{equation}

\noindent
2. Now we need the same estimate for $R_{k2}$. We need to make use of the oscillations in higher derivatives of $R_1$. Observe that
\begin{align}
    &\int_{\R^n} \partial_x^\alpha R_1(x, z, t - s) R_{k - 1}(z, y, s)\, dz \nonumber \\
    =& \int_{\R^n} (\partial_x^\alpha R)(x, \varphi^{t - s}(x - z), t - s) R_{k - 1}(\varphi^{t - s}(x - z), y, s) \det(d\varphi^{t - s}(x - z))\, dz \nonumber\\
    =& \int_{\R^n} \big[\partial_x^\alpha R_1(x, \varphi^{t - s}(x - z), t - s) + \partial_z^\alpha R_1(x, \varphi^{-t - s}(x - z), t - s\big] \nonumber\\
    &\qquad \qquad R_{k - 1}(\varphi^{t - s}(x - z), y, s) \det(d\varphi^{t - s}(x - z))\, dz\nonumber\\
    =& \int_{\R^n}\partial_x^\alpha R_1(x, \varphi^{t - s}(x - z), t - s) R_{k + 1}(\varphi^{t - s}(x - z), y, s) \det(d\varphi^{t - s}(x - z))\, dz \nonumber\\
    &\quad + (-1)^{|\alpha|}\int_{\R^n} R_1(x, \varphi^{t - s}(x - z), t - s) \nonumber\\
    &\qquad \qquad \qquad \qquad \partial_z^\alpha(R_{k + 1}(\varphi^{t - s}(x - z), y, s) \det(d\varphi^{t - s}(x - z)))\, dz \label{eq:gaussian_IBP}
\end{align}
Using \eqref{eq:dxR1'}, the first term on the right-hand-side of \eqref{eq:gaussian_IBP} can be estimated by 
\begin{align}
    &\left|\int_{\R^n}\partial_x^\alpha R_1(x, \varphi^{t - s}(x - z), t - s) R_{k + 1}(\varphi^{t - s}(x - z), y, s) \det(d\varphi^{t - s}(x - z))\, dz\right| \nonumber \\
    \le& C \int_{t/2}^t \int_{\R^n} (\epsilon^2(t - s))^{-\frac{n}{2}} (\epsilon^2 s)^{-\frac{n}{2}} (t - s)^{-\ha} s^{\frac{k - 3}{2}} \nonumber \\
    &\qquad \qquad \qquad  \exp \left( - c \frac{|z|^2}{\epsilon^2(t - s)} \right) \exp \left( -c\frac{|\varphi^{t - s}(x - z) - \varphi^{-s}(y)|^2}{\epsilon^2 s} \right)\, dz ds \nonumber \\
    \le& C \exp\left(-c \frac{|x - \varphi^{-t}(y)|^2}{\epsilon^2 t} \right) \int_{t/2}^t (\epsilon^2 t)^{-\frac{n}{2}} (t - s)^{-\ha} s^{\frac{k - 3}{2}}\, ds \nonumber \\
    \le& C (\epsilon^2 t)^{-\frac{n}{2}} t^{\frac{k}{2} - 1} \exp\left(-c \frac{|x - \varphi^{-t}(y)|^2}{\epsilon^2 t} \right) \label{eq:Rk2_good}
\end{align}
Here, the integral of the product of two Gaussians is computed similarly to \eqref{eq:exp_integral}, and the constants $C$ and $c > 0$ may change from line to line, but remains independent of $\epsilon$ and $t$. 

On the other hand, the second term on the right-hand-side of \eqref{eq:gaussian_IBP} is bounded by 
\begin{align}
    &\left|\int_{\R^n} R_1(x, \varphi^{t - s}(x - z), t - s) \partial_z^\alpha(R_{k + 1}(\varphi^{t - s}(x - z), y, s) \det(d\varphi^{t - s}(x - z)))\, dz\right| \nonumber \\
    \le& C \int_{t/2}^t \int_{\R^n} (\epsilon^2 (t - s))^{-\frac{n}{2}} (\epsilon^2 s)^{-\frac{n + |\alpha|}{2}} (t - s)^{-\ha} s^{\frac{k - 3}{2}} \nonumber \\
    &\qquad \qquad \qquad  \exp \left( - c \frac{|z|^2}{\epsilon^2 (t - s)} \right) \exp \left( -c\frac{|\varphi^{t - s}(x - z) - \varphi^{-s}(y)|^2}{\epsilon^2 s} \right)\, dz ds \nonumber \\
    \le& C \exp\left(-c \frac{|x - \varphi^{-t}(y)|^2}{\epsilon^2 t} \right) \int_{t/2}^t (\epsilon^2 t)^{-\frac{n}{2}} (\epsilon^2 s)^{-\frac{|\alpha|}{2}} (t - s)^{-\ha} s^{\frac{k - 3}{2}}\, ds \nonumber \\
    \le& C (\epsilon^2 t)^{-\frac{n + |\alpha|}{2}} t^{\frac{k}{2} - 1} \exp\left(-c \frac{|x - \varphi^{-t}(y)|^2}{\epsilon^2 t} \right). \label{eq:Rk2_bad}
\end{align}
Combining~\eqref{eq:Rk2_good} and~\eqref{eq:Rk2_bad}, we obtain the desired estimate on $R_{k2}$:
\begin{equation*}
    |\partial_x^\alpha R_{k 2}(x, y, t)| \le C(\epsilon^2 t)^{-\frac{n + |\alpha|}{2}} t^{\frac{k}{2} - 1} \exp\left(-c \frac{|x - \varphi^{-t}(y)|^2}{\epsilon^2 t} \right).
\end{equation*}
Combining with~\eqref{eq:Rk1_est} yields~\eqref{eq:Rk_est}.

\noindent
3. It remains to estimate $K_j$. Note that 
\begin{equation}
    \begin{gathered}
        |\partial_x^\alpha K_0(x, y, t)| \le C (\epsilon^2 t)^{-\frac{n + |\alpha|}{2}} \exp \left(-c \frac{|x - \varphi^{-t}(y)|^2}{\epsilon^2 t} \right) \\
        |\partial_x^\alpha K_0(x, \varphi^t(x - y), t)| \le C (\epsilon^2 t)^{-\frac{n}{2}} \exp \left(-c \frac{|x - \varphi^{-t}(y)|^2}{\epsilon^2 t} \right) 
    \end{gathered}
\end{equation}
Repeating Step 2 replacing $R_1(x, z, t - s)$ with $K_0(x, z, t - s)$ yields the estimate
\begin{equation*}
    \left|\int_0^t \int_{\R^n} \partial_x^\alpha K_0(x, z, t - s) R_{k}(z, y, s)\, dz dt \right| \le C (\epsilon^2 t)^{-\frac{n + |\alpha|}{2}} t^{\frac{k}{2}} \exp \left(-c \frac{|x - \varphi^{-t}(y)|^2}{\epsilon^2 t} \right)
\end{equation*}
Summing over $k$ yields
\begin{equation}
    |\partial_x^\alpha K_j(x, y, t)| \le (\epsilon^2 t)^{-\frac{n + |\alpha|}{2}} \exp \left(-c \frac{|x - \varphi^{-t}(y)|^2}{\epsilon^2 t} \right)
\end{equation}
for $\epsilon, t \in (0, 1]$. In other words, the effects of $K_0$ dominates as expected. 
\end{proof}

Estimates for derivatives in $y$ for $R_k(x, y, t)$ can also be established, but we really only need one derivative in $y$ for $R_1$ and $R_2$, so we focus on these cases for simplicity. 
\begin{lemma}\label{lem:R1_est_y}
    $R_1(x, y, t)$ and $R_2(x, y, t)$ satisfy the estimates
    \begin{equation}
        |\nabla_y R_1(x, y, t)| \le C (\epsilon^2 t)^{-\frac{n + 1}{2}} t^{-\ha} \exp\left(-c \frac{|x - \varphi^{-t}(y)|^2}{\epsilon^2 t} \right)
    \end{equation}
    and 
    \begin{equation}\label{eq:dyR2}
        |\nabla_y R_2(x, y, t)| \le C (\epsilon^2 t)^{-\frac{n + 1}{2}} \exp\left(-c \frac{|x - \varphi^{-t}(y)|^2}{\epsilon^2 t} \right)
    \end{equation}
    for $\epsilon, t \in (0, 1]$ with $c, C > 0$ independent of $\epsilon$ and $t$. 
\end{lemma}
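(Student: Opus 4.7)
The plan is to treat the two estimates separately, using Lemma~\ref{lem:R1_est} and the techniques of Lemma~\ref{lem:Rk_est}.

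For the estimate on $\nabla_y R_1$, I will simply differentiate the explicit expansion~\eqref{eq:R_1_expansion} in $y$. All $y$-dependence enters through $\varphi^{-t}(y)$, so the chain rule produces a factor of $d\varphi^{-t}(y)$, which is uniformly bounded for $t \in (0, 1]$ by Gronwall applied to~\eqref{eq:v_est}. Applying $\nabla_y$ to $K_0$ produces a prefactor bounded by $C|x - \varphi^{-t}(y)|/(\epsilon^2 t)$, which after absorption into the Gaussian exponential costs $(\epsilon^2 t)^{-1/2}$. Applying $\nabla_y$ to the polynomial coefficients in~\eqref{eq:R_1_expansion} produces, using computations analogous to~\eqref{eq:quadform_derivatives1}--\eqref{eq:quadform_derivatives2} and~\eqref{eq:Pj_derivatives}, terms of the same homogeneity or better. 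This yields the desired bound $|\nabla_y R_1| \le C (\epsilon^2 t)^{-(n+1)/2} t^{-1/2} \exp(-c|x - \varphi^{-t}(y)|^2/(\epsilon^2 t))$.

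For the estimate on $\nabla_y R_2$, I differentiate under the integral in the defining identity~\eqref{eq:Rk_def} and split the time integral at $s = t/2$. The regime $s \in [t/2, t]$ is the easy one: I apply the bound just proved to $\nabla_y R_1(z, y, s)$, combine it with the pointwise bound on $R_1(x, z, t-s)$, and use the Gaussian convolution estimate~\eqref{eq:exp_integral}. After the $z$-integration one obtains an integrand controlled by $C\epsilon^{-(n+1)}t^{-n/2}(t-s)^{-1/2}s^{-1}\exp(-c|x - \varphi^{-t}(y)|^2/(\epsilon^2 t))$, and since $s$ is bounded below by $t/2$, the $s$-integral converges and produces the desired $(\epsilon^2 t)^{-(n+1)/2}$.

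The regime $s \in [0, t/2]$ is the main obstacle: the $s^{-1}$ singularity from inserting the naive $\nabla_y R_1$ bound is non-integrable at $s = 0$. The remedy is to exploit the flow-following structure of $R_1$, just as in Step~2 of Lemma~\ref{lem:Rk_est}. Writing $z = w + \varphi^{-s}(y)$ and applying the chain rule gives the decomposition
\begin{equation*}
    \nabla_y R_1(z, y, s) = \nabla_y \bigl[R_1(w + \varphi^{-s}(y), y, s)\bigr]\Big|_{w = z - \varphi^{-s}(y)} - d\varphi^{-s}(y)^T \nabla_z R_1(z, y, s).
\end{equation*}
The first term is a combined derivative along the flow whose analysis parallels the proof of~\eqref{eq:dyR1'}, giving the improved bound $C(\epsilon^2 s)^{-n/2} s^{-1/2} \exp(-c|w|^2/(\epsilon^2 s))$ without the extra $(\epsilon^2 s)^{-1/2}$ factor; after the change of variables $w = z - \varphi^{-s}(y)$ in the integral, this piece is controlled as in the $[t/2,t]$ calculation. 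For the second term I integrate by parts in $z$ to move $\nabla_z$ onto $R_1(x, z, t-s)$, where it becomes a $\partial_2 R_1$ derivative controlled by the first estimate of this lemma. Both resulting integrals satisfy $\int_0^{t/2}(t-s)^{-1}s^{-1/2}\,ds \le Ct^{-1/2}$ and yield the bound $C(\epsilon^2 t)^{-(n+1)/2}\exp(-c|x-\varphi^{-t}(y)|^2/(\epsilon^2 t))$. The exponentials are combined using the reverse triangle inequality $|A - B|^2 \ge \tfrac{1}{2}|A|^2 - C|B|^2$ together with $s \le t - s$ on $[0, t/2]$, so that any growth in $|w|^2/(\epsilon^2(t-s))$ arising from the shift is absorbed into the decay in $|w|^2/(\epsilon^2 s)$, with constants changing from line to line as in~\eqref{eq:exp_integral}.
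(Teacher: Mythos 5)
Your proposal is correct and follows the same strategy as the paper: split the $s$-integral at $t/2$, handle $[t/2,t]$ by the naive bound, and on $[0,t/2]$ avoid the non-integrable $s^{-1}$ singularity by the flow-following chain-rule decomposition combined with an integration by parts in $z$, with the second term after IBP then controlled by the first estimate of the lemma. The only minor deviation is in the first estimate: you differentiate the explicit expansion~\eqref{eq:R_1_expansion} directly and absorb the polynomial growth into the Gaussian, whereas the paper instead applies the chain-rule identity $\nabla_x\bigl[R_1(x,\varphi^t(x-y),t)\bigr]=(\nabla_x R_1)-(d\varphi^t(x-y))^\intercal(\nabla_y R_1)$ to read the bound off from Lemma~\ref{lem:R1_est}; both routes deliver the same estimate at essentially the same cost.
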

\begin{proof}
    1. The estimate for $\nabla_yR_1(x, y, t)$ follows from Lemma \ref{lem:R1_est} and the identity
    \begin{multline}
        \nabla_x R_1(x, \varphi^t(x - y), t) = (\nabla_x R_1)(x, \varphi^t(x - y), t) \\
        - (d\varphi^t(x - y))^\intercal (\nabla_yR_1)(x, \varphi^t(x - y), t).
    \end{multline}
    Indeed, it follows that 
    \begin{equation}
        |(\nabla_y R_1)(x, \varphi^t(x - y), t)| \le C((\epsilon^2 t)^{-\frac{n + 1}{2}} t^{-\ha} + (\epsilon^2 t)^{-\frac{n}{2}} t^{-\ha})\exp\left(-c \frac{|y|^2}{\epsilon^2 t} \right),
    \end{equation}
    so the desired estimate follows upon putting $y \mapsto x - \varphi^{-t}(y)$.
    
    \noindent
    2. Estimating $\nabla_y R_2(x, y, t)$ is a similar argument to the proof of Lemma \ref{lem:Rk_est}. First, observe that 
    \begin{align}
        &\left|\int_{t/2}^t \int_{\R^n} R_1(x, z, t - s) \nabla_y R_1 (z, y, s) \, dz ds \right| \nonumber \\
        \le& C\int_{t/2}^t \int_{\R^n} (h(t - s))^{-\frac{n}{2}} (\epsilon^2 s)^{-\frac{n + 1}{2}} (t - s)^{-\ha} s^{-\ha} \nonumber \\
        &\hspace{2cm} \exp \left(-c \frac{|x - \varphi^{-(t - s)}(y)|^2}{\epsilon^2 (t - s)} \right) \exp\left( -c \frac{|x - \varphi^{-s}(y)|^2}{\epsilon^2 s} \right) \, dz ds \nonumber \\
        \le& C (\epsilon^2 t)^{-\frac{n}{2}} \exp \left(-c\frac{|x - \varphi^{-t}(y)|^2}{\epsilon^2 t} \right) \int_{t/2}^t (t - s)^{-\ha} s^{-\ha} (\epsilon^2 s)^{-\ha}\, ds \nonumber \\
        \le& C (\epsilon^2 t)^{-\frac{n + 1}{2}} \exp \left(-c\frac{|x - \varphi^{-t}(y)|^2}{\epsilon^2 t} \right)\label{eq:dyR2_1}
    \end{align}
    Next, we need the integral from $0$ to $t/2$, which requires requires the following integration by parts:
    \begin{align}
        &\int_{\R^n} R_1(x, z, t - s) \nabla_y R_1(z, y, s)\, dz \nonumber \\
        =& \int_{\R^n} R_1(x, z + \varphi^{-s}(y), t - s) (\nabla_y R_1)(z + \varphi^{-s}(y), y, s)\, dz \nonumber \\
        =& \int_{\R^n} R_1(x, z + \varphi^{-s}(y), t - s)[\nabla_y R_1(z + \varphi^{-s}(y), y, s) \nonumber \\
        &\hspace{6cm} - (d\varphi^{-s}(y))^\intercal \nabla_z R_1(z + \varphi^{-s}(y), y, s)]\, dz \nonumber \\
        =& \int_{\R^n} R_1(x, z + \varphi^{-s}(y), t - s) \nabla_y R_1(z + \varphi^{-s}(y), y, s)\, dz \nonumber \\
        &\qquad + \int_{\R^n} d\varphi^{-s}(y) \nabla_z R_1(x, z + \varphi^{-s}(y), t - s) R_1(z + \varphi^{-s}(y), y, s)\, dz. \label{eq:R1dyR1}
    \end{align}
    Integrating the first term on the right-hand-side of \eqref{eq:R1dyR1} from $0$ to $t/2$ and estimating the derivatives using Lemma \ref{lem:R1_est}, we see that 
    \begin{align}
        &\left|\int_0^{t/2}\int_{\R^n} R_1(x, z + \varphi^{-s}(y), t - s) \nabla_y R_1(z + \varphi^{-s}(y), y, s)\, dz ds \right| \nonumber \\
        \le& C \int_0^{t/2}\int_{\R^n}(\epsilon^2 (t - s))^{-\frac{n}{2}} (t -s )^{-\ha} (\epsilon^2 s)^{\frac{n}{2}} s^{-\ha} \nonumber \\
        &\hspace{2cm} \exp \left(-c \frac{|x - \varphi^{-(t - s)}(z - \varphi^{-s}(y))|^2}{\epsilon^2 (t - s)} \right) \exp \left(-c \frac{|z|^2}{\epsilon^2 s} \right)\, dz ds \nonumber \\
        \le& C (\epsilon^2 t)^{-\frac{n}{2}} \exp \left(-c \frac{|x - \varphi^{-t}(y)|^2}{\epsilon^2 t} \right) \int_0^{t/2} (t - s)^{-\ha} s^{-\ha}\, ds \nonumber \\
        \le& C (\epsilon^2 t)^{-\frac{n}{2}}  \exp \left(-c \frac{|x - \varphi^{-t}(y)|^2}{\epsilon^2 t} \right). \label{eq:dyR2_2}
    \end{align}
    Now integrating the second term on the right-hand-side of \eqref{eq:R1dyR1}, we see that
    \begin{align}
        & \left|\int_0^{t/2} \int_{\R^n} d\varphi^{-s}(y) \nabla_z R_1(x, z + \varphi^{-s}(y), t - s) R_1(z + \varphi^{-s}(y), y, s)\, dz \right|\nonumber \\
        \le& C \int_0^{t/2}\int_{\R^n}(\epsilon^2 (t - s))^{-\frac{n + 1}{2}} (t -s )^{-\ha} (\epsilon^2 s)^{\frac{n}{2}} s^{-\ha} \nonumber \\
        &\hspace{2cm} \exp \left(-c \frac{|x - \varphi^{-(t - s)}(z - \varphi^{-s}(y))|^2}{\epsilon^2 (t - s)} \right) \exp \left(-c \frac{|z|^2}{\epsilon^2 s} \right)\, dz ds \nonumber \\
        \le& C (\epsilon^2 t)^{-\frac{n}{2}} \int_0^{t/2} (t - s)^{-\ha} (\epsilon^2 (t -s))^{-\ha} s^{-\ha}\, ds  \exp \left(-c \frac{|x - \varphi^{-t}(y)|^2}{\epsilon^2 t} \right) \nonumber \\
        \le& C (\epsilon^2 t)^{-\frac{n + 1}{2}}  \exp \left(-c \frac{|x - \varphi^{-t}(y)|^2}{\epsilon^2 t} \right) \label{eq:dyR2_3}
    \end{align}
    Combining \eqref{eq:R1dyR1}-\eqref{eq:dyR2_3} gives
    \begin{equation*}
        \left| \int_0^{t/2} \int_{\R^n} R_1(x, z, t - s) \nabla_y R_1(z, y, s)\, ds\right| \le C (\epsilon^2 t)^{-\frac{n + 1}{2}}  \exp \left(-c \frac{|x - \varphi^{-t}(y)|^2}{\epsilon^2 t} \right),
    \end{equation*}
    which yields \eqref{eq:dyR2} upon combining with \eqref{eq:dyR2_1}. 
\end{proof}

The upshot of the remainder estimates of Lemmas~\ref{lem:Rk_est} and~\ref{lem:R1_est_y} is that the $L^1_x$-norm of $\epsilon$-semiclassical derivatives of $R_k(x, y, t)$ is controlled uniformly in $\epsilon$ and $y$, and improves in powers of $t$ for large $k$. This uniformity in $\epsilon$ is crucial for establishing an $L^1$-based smoothing estimate on semiclassical derivatives of solutions to the evolution equation~\eqref{eq:cauchy_problem}.

\section{\texorpdfstring{$L^1$}{L1} classical estimates}\label{sec:L1}
Now we use the parametrix constructed in the previous section to establish the necessary $L^1$ estimates. We first collect some classical results on the well-posedness of second order parabolic equations with unbounded coefficients. 
\begin{proposition}\label{prop:WP}
    Consider the Cauchy problem~\eqref{eq:cauchy_problem} satifying assumptions~\eqref{eq:uniform_ell_est}-\eqref{eq:v_est}. If $u_0 \in C^\infty(\R^n)$ satisfies 
    \begin{equation}
        |u_0(x)| \le B e^{\beta |x|^2}
    \end{equation}
    for some $B, \beta > 0$. Then for each $\epsilon > 0$, there exists $T_\epsilon > 0$ and a unique classical solution to \eqref{eq:cauchy_problem} for $0 \le t \le T_\epsilon$ that satisfies
    \begin{equation}
        |u(x, t)| \le B_\epsilon e^{\beta_\epsilon |x|^2}
    \end{equation}
    for some $B_\epsilon, \beta_\epsilon > 0$. Furthermore, $u(x, t)$ is given by 
    \begin{equation}
        e^{tQ}u_0(x, t) := \int_{\R^n} K(x, y, t; \epsilon) u_0(y)\, dy
    \end{equation}
    for some $K(\bullet, \bullet, \bullet; \epsilon) \in \mathcal C^\infty(\R^n_x \times \R^n_y \times \R_t)$ such that
    \begin{equation}
        K(x, y, t; \epsilon) \le C_\epsilon \exp \left(-c_\epsilon \frac{|x - y|^2}{t} \right), \quad 0 < t \le T_\epsilon
    \end{equation}
    for some $C_\epsilon, c_\epsilon > 0$ and $e^{tQ}$ satisfies the semigroup property.
\end{proposition}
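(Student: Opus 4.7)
The plan is to construct the fundamental solution explicitly from the Levi-type parametrix of Section~\ref{sec:param} and invoke classical Tikhonov-type uniqueness for parabolic equations with linearly growing drift. Throughout, $\epsilon > 0$ is fixed and all constants may depend on $\epsilon$.

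First I would sum the parametrix series. A standard refinement of the induction in Lemma~\ref{lem:Rk_est} shows that the constants $C_{k,\alpha}$ can be taken to grow at most factorially in $k$ while the decay constants $c_{k,\alpha}$ remain bounded below by a fixed $c_\epsilon > 0$; this is the classical content of the Levi construction. Combined with the $t^{k/2 - 1}$ factor in~\eqref{eq:Rk_est}, this yields absolute and uniform convergence of $R(x,y,t) := \sum_{k=1}^\infty R_k(x,y,t)$ on $\R^n \times \R^n \times (0, T_\epsilon]$ for some $T_\epsilon > 0$, along with a Gaussian bound. I then define
\[K(x,y,t) := K_0(x,y,t) + \int_0^t \int_{\R^n} K_0(x,z,t-s) R(z,y,s)\, dz\, ds.\]
The identity $(\partial_t - Q)K_j = -R_{j+1}$ from Section~\ref{sec:param}, combined with $R_{j+1} \to 0$ as $j \to \infty$, shows $(\partial_t - Q)K = 0$ classically; differentiation under the integral is justified by the derivative bounds of Lemmas~\ref{lem:Rk_est} and~\ref{lem:R1_est_y}. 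The initial condition $K(\cdot,y,0^+) = \delta_y$ is inherited from the normalization of $K_0$, since the iterated integral is $O(t^{1/2})$. The Gaussian bound on $K$ follows from the one on $K_0$ after using $|\varphi^{-t}(y) - y| \le Ct\langle y \rangle$ for $t$ small to replace $\varphi^{-t}(y)$ by $y$ inside the exponential at the cost of shrinking $c_\epsilon$.

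Given $K$, I would define $u(x,t) := \int_{\R^n} K(x,y,t) u_0(y)\, dy$. Since $K$ decays as $\exp(-c_\epsilon |x-y|^2/t)$ while $u_0$ grows as $\exp(\beta|y|^2)$, the integral converges absolutely whenever $t < c_\epsilon / \beta$, and after shrinking $T_\epsilon$ if necessary, dominated convergence legitimizes differentiation under the integral. This produces a classical solution with the claimed Gaussian growth bound which recovers $u_0$ as $t \to 0^+$.

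Finally, uniqueness inside the class of Gaussian-growth solutions is the classical Tikhonov-type theorem for parabolic equations with bounded principal part and at most linearly growing drift; the standard proof tests against a Gaussian weight of the form $\exp(-\lambda|x|^2/(1-\mu t))$ and applies the maximum principle on expanding balls. The semigroup identity $e^{(s+t)Q} = e^{sQ} e^{tQ}$ is then immediate, since $e^{sQ}u_0$ has Gaussian growth (with possibly enlarged parameters) and both sides solve the same Cauchy problem with datum $e^{sQ}u_0$. The main obstacle is the bookkeeping of $\epsilon$-dependent constants in the parametrix series to extract a uniform positive existence time and Gaussian bound; the uniqueness and smoothness statements are then standard.
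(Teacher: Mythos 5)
The paper does not give its own proof of this proposition: it cites the classical theory of parabolic Cauchy problems with unbounded coefficients (Zhitomirskii, Ejdelman, Friedman). Your plan is a genuinely different route in the sense that you reconstruct the fundamental solution from the Levi-type parametrix of Section~\ref{sec:param} and supply a Tikhonov uniqueness argument, making the result self-contained. That is a reasonable thing to want, and it has the added benefit of showing that the iterated kernels $R_k$ actually converge to the remainder of the true heat kernel. But two of your quantitative claims need correction.

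The convergence claim is the serious one. You assert that the constants $C_{k,\alpha}$ in Lemma~\ref{lem:Rk_est} ``can be taken to grow at most factorially in $k$'' and that this, together with the $t^{k/2-1}$ factor, forces convergence of $\sum_k R_k$. This cannot be right as stated: if $C_k \sim k!$, the $k$-th term is bounded by $k!\, t^{k/2-1}$ and the ratio of consecutive bounds is $\sim (k+1)t^{1/2} \to \infty$, so the majorant series diverges for every $t > 0$. What the Levi iteration actually produces is much better than factorial growth. Performing the time integral $\int_0^t (t-s)^{-1/2} s^{(k-3)/2}\,ds = t^{k/2-1} B\!\left(\tfrac12, \tfrac{k-1}{2}\right)$ inside the recursion $R_k = \int_0^t R_1 * R_{k-1}$ gives $C_k \lesssim C_1 C_{k-1} B(\tfrac12,\tfrac{k-1}{2})$, so $C_k \lesssim M^k \Gamma(\tfrac12)^k / \Gamma(k/2)$, which decays superexponentially. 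This Beta-function gain is the essential point, not a cosmetic refinement: without it the proof collapses.

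The Gaussian bound step is also imprecise. The parametrix estimates from Section~\ref{sec:param} give decay in $|x - \varphi^{-t}(y)|$, and you propose to pass to $|x-y|$ using $|\varphi^{-t}(y) - y| \le Ct\langle y\rangle$ at the cost of shrinking $c_\epsilon$. But the inequality $|x-\varphi^{-t}(y)|^2 \ge \tfrac12 |x-y|^2 - C t^2 \langle y\rangle^2$ costs you a factor $\exp(Ct\langle y\rangle^2/\epsilon^2)$ in the bound, which is not uniformly bounded in $y$ for any fixed $t>0$. Since the drift $v$ is allowed to grow linearly, the kernel is genuinely centered along the flow and is not uniformly dominated by a Gaussian in $x-y$ (consider $x=\varphi^{-t}(y)$, $|y|\to\infty$). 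For the application one only needs integrability of $K(\cdot,y,t)$ against Gaussian-growth data, which your $x-\varphi^{-t}(y)$ bound already provides, so the step should be reformulated rather than pushed through in this form.

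The Tikhonov uniqueness argument and the semigroup identity are standard and correct as sketched.
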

The fundamental solution in the case of unbounded coefficients was first constructed in \cite{Zhito}. We refer the reader to \cite[\S2.2]{Ejdelman94} and \cite[\S2.4]{Friedman_parabolic} for proofs of the above proposition and a complete introduction to well-posedness theory for the parabolic Cauchy problem with unbounded coefficients. 

We also remark that the dependence on $\epsilon$ for the time of existence does not matter for our setting due to the semigroup property; having sufficiently good estimates on the solution at later times will allow us to extend the time of existence.

\subsection{Short-time estimate}
Now we make use of the crucial assumption \eqref{eq:divfree} that $v(x)$ is divergence-free. This gives us the conservation of mass for the evolution equation~\eqref{eq:cauchy_problem}.
\begin{lemma}\label{lem:L1_bound}
    Suppose $u(x, t) \in C^\infty(\R^n_x \times \R_t)$ satisfies \eqref{eq:cauchy_problem} and 
    \[|u(x, t)| \le B e^{\beta|x|^2}\]
    for some $B, \beta > 0$. Then 
    \begin{equation}
        \|u(t)\|_{L^1} \le \|u_0\|_{L^1}. 
    \end{equation}
\end{lemma}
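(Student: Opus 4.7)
The plan is to combine the divergence-free condition~\eqref{eq:divfree} with a smooth approximation to the absolute value and then pass to a limit. For $\delta > 0$, set $\eta_\delta(s) := \sqrt{s^2 + \delta^2} - \delta$, so that $\eta_\delta \nearrow |s|$ as $\delta \searrow 0$, with $|\eta_\delta'(s)| \le 1$ and $\eta_\delta''(s) = \delta^2 (s^2 + \delta^2)^{-3/2} \ge 0$. I would first reduce by density to the case $u_0 \in C_c^\infty(\R^n)$: the parametrix representation $u(\cdot, t) = \int K(\cdot, y, t)\, u_0(y)\, dy$ together with the Gaussian bound~\eqref{eq:Kj_est} from Lemma~\ref{lem:Rk_est} gives $\|e^{tQ}\|_{L^1 \to L^1} \lesssim 1$, so once the sharp contraction is proved for compactly supported smooth data it automatically extends to all $u_0 \in L^1$ satisfying the growth hypothesis.

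For $u_0 \in C_c^\infty$, Lemma~\ref{lem:Rk_est} further provides Gaussian spatial decay of both $u(\cdot, t)$ and $\nabla u(\cdot, t)$, which makes the following manipulations absolutely convergent. Differentiating $\eta_\delta(u)$ in time and using the chain rule together with $Q u = \epsilon^2 \nabla \cdot (A \nabla u) + v \cdot \nabla u$ yields
\begin{equation*}
    \partial_t \eta_\delta(u) = \epsilon^2 \nabla \cdot (\eta_\delta'(u) A \nabla u) - \epsilon^2 \eta_\delta''(u) \langle \nabla u, A \nabla u\rangle + \nabla \cdot (v\, \eta_\delta(u)),
\end{equation*}
where the crucial rewriting $\eta_\delta'(u)\, v \cdot \nabla u = v \cdot \nabla \eta_\delta(u) = \nabla \cdot (v\, \eta_\delta(u))$ uses exactly the divergence-free assumption~\eqref{eq:divfree}. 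Integrating over $\R^n$, the Gaussian decay of $u$ and $\nabla u$ (which beats the at most linear growth of $v$ from~\eqref{eq:v_est}) allows us to drop both divergence terms, leaving
\begin{equation*}
    \frac{d}{dt} \int_{\R^n} \eta_\delta(u(x, t))\, dx = -\epsilon^2 \int_{\R^n} \eta_\delta''(u) \langle \nabla u, A \nabla u\rangle\, dx \le 0
\end{equation*}
by ellipticity~\eqref{eq:uniform_ell_est} and $\eta_\delta'' \ge 0$. Integrating in time then gives $\int \eta_\delta(u(t))\, dx \le \int \eta_\delta(u_0)\, dx$; letting $\delta \searrow 0$ and applying monotone convergence on both sides yields $\|u(t)\|_{L^1} \le \|u_0\|_{L^1}$.

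The main technical obstacle is justifying that the two divergence contributions really vanish after integration. Under the bare hypothesis $|u(x, t)| \le B e^{\beta |x|^2}$ the quantity $v\, \eta_\delta(u)$ is permitted to grow superlinearly in $x$, so a naive cutoff applied directly to~\eqref{eq:cauchy_problem} does not close. The reduction to compactly supported $u_0$, followed by invocation of the Gaussian parametrix bounds~\eqref{eq:Kj_est} from Section~\ref{sec:param}, is precisely what circumvents this difficulty; the density argument at the end then transfers the sharp contraction to the full class covered by Proposition~\ref{prop:WP}.
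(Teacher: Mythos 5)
Your argument is correct, but it takes a genuinely different route from the paper's own proof. The paper argues by (i) citing positivity preservation of the parabolic evolution from Friedman's book, (ii) splitting $u_0 = u_+ - u_-$ into positive and negative parts, and (iii) using conservation of mass for nonnegative data (a one-line consequence of the divergence structure of $Q$ and~\eqref{eq:divfree}) to conclude $\|e^{tQ}u_\pm\|_{L^1} = \|u_\pm\|_{L^1}$. Your approach instead proves a Kato-type inequality directly: regularize $|\cdot|$ by the convex $\eta_\delta$, differentiate $\int \eta_\delta(u)\,dx$, exploit the divergence-free condition and $\eta_\delta'' \ge 0$ together with ellipticity to get a nonpositive time derivative, and pass to the limit $\delta \to 0$. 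Both arguments rest on the same two structural facts (divergence form of $Q$ plus $\nabla \cdot v = 0$), but yours is more self-contained: it avoids quoting the positivity-preservation theorem and avoids the nonsmoothness of $u_\pm$, which the paper's decomposition quietly glosses over. The price is that you must justify dropping the two divergence integrals, which you handle by first reducing to $u_0 \in C_c^\infty$ (using Gaussian kernel bounds to get decay of $u$ and $\nabla u$) and then extending by density once the contraction is known on a dense set. One small point worth being explicit about: the density extension requires some a priori $L^1 \to L^1$ bound on $e^{tQ}$ to pass the contraction to the limit; the $\epsilon$-dependent Gaussian bound on $K$ from Proposition~\ref{prop:WP} already gives $\|e^{tQ}\|_{L^1 \to L^1} < \infty$ for fixed $\epsilon, t$, which is enough (you do not actually need the uniform-in-$\epsilon$ parametrix bound~\eqref{eq:Kj_est} here, and strictly speaking~\eqref{eq:Kj_est} controls the parametrix $K_j$ rather than $K$ itself, so citing Proposition~\ref{prop:WP} is cleaner).
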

\begin{proof}
By Proposition \eqref{prop:WP}, we know that $u(\bullet, t) \in L^1$. The Fokker--Planck equation preserves positivity \cite[\S4.1 Theorem 9]{Friedman_parabolic}. Since $v(x)$ is divergence free, the $L^1$-boundedness of the Fokker-Planck evolution follows by decomposing $u$ into positive and negative parts by $u_0 = u_+ - u_-$, and seeing that
\[\|e^{tQ}u_0\|_{L^1} \le \|e^{Qt} u_+\|_{L^1} + \|e^{Qt} u_-\|_{L^1} = \|u_+ + u_-\|_{L^1} = \|u_0\|_{L^1}\]
as desired. 
\end{proof}

\begin{lemma}\label{lem:dyK_est}
    Let $K(x, y, t)$ be the fundamental solution from Proposition~\ref{prop:WP}. Then 
    \begin{equation}
        \|\partial_y K(\bullet, y, t)\|_{L^1} \le C(\epsilon^2t)^{-\ha},
    \end{equation}
    where $C$ is independent of $0 < \epsilon, t \le 1$.
\end{lemma}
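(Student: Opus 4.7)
The plan is to use the Duhamel identity
\[K(x, y, t) = K_1(x, y, t) + \int_0^t \int_{\R^n} K(x, z, t-s) R_2(z, y, s) \, dz \, ds,\]
which follows from $(\partial_t - Q)K_1 = -R_2$ and the semigroup property of $K$. Since $K(x, z, t-s)$ has no $y$-dependence, the derivative $\partial_y$ falls only on $K_1$ and $R_2$, so
\[\partial_y K(x, y, t) = \partial_y K_1(x, y, t) + \int_0^t \int_{\R^n} K(x, z, t-s) \partial_y R_2(z, y, s) \, dz \, ds.\]
The correction term is handled at once: mass conservation (Lemma~\ref{lem:L1_bound} applied to $\delta_z$) gives $\|K(\cdot, z, t-s)\|_{L^1_x} \le 1$, while Lemma~\ref{lem:R1_est_y} gives $\int|\partial_y R_2(z, y, s)|\, dz \le C(\epsilon^2 s)^{-1/2}$, so its $L^1_x$-norm is at most $\int_0^t C(\epsilon^2 s)^{-1/2}\, ds = 2C\epsilon^{-1}\sqrt t \le C(\epsilon^2 t)^{-1/2}$ for $t \le 1$.

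Next I would split $\partial_y K_1 = \partial_y K_0 + \int_0^t \int K_0(x, z, t-s) \partial_y R_1(z, y, s) \, dz \, ds$. The $\partial_y K_0$ piece has $L^1_x$-norm $\le C(\epsilon^2 t)^{-1/2}$ by a direct Gaussian computation. The main obstacle is the remaining integral, since the naive estimate $\int |\partial_y R_1(z, y, s)|\, dz \le C\epsilon^{-1}s^{-1}$ from Lemma~\ref{lem:R1_est_y} is not integrable at $s = 0$.

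To handle it, I would integrate by parts in $z$ using the chain-rule identity
\[\partial_y R_1(z, y, s) + (d\varphi^{-s}(y))^{\intercal}\partial_z R_1(z, y, s) = G(z, y, s),\]
where $G$ collects the $z$- and $y$-derivatives of $R_1$ that do not differentiate the Gaussian factor $\exp(-c|z - \varphi^{-s}(y)|^2/(\epsilon^2 s))$; since those harmless derivatives only fall on the coefficients $A$, $v$, $\det A$ and on the polynomials $\mathcal{P}_j$ of \eqref{eq:R_1_expansion}, they satisfy $|G| \le C|R_1|$ and hence $\int |G(z, y, s)|\, dz \le C(1 + \epsilon s^{-1/2})$, which is integrable on $[0, t]$. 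After substituting this identity and moving $\partial_z$ onto $K_0$, the dangerous piece becomes $\int_0^t (d\varphi^{-s}(y))^{\intercal} \int \partial_z K_0(x, z, t-s) R_1(z, y, s) \, dz \, ds$, which in $L^1_x$ is bounded using the smoothing estimate $\|\partial_z K_0(\cdot, z, t-s)\|_{L^1_x} \le C(\epsilon^2(t-s))^{-1/2}$ (from the Gaussian structure of $K_0$, a special case of Lemma~\ref{lem:Rk_est}), the uniform Jacobian bound $|d\varphi^{-s}(y)| \le C$ obtained from Gronwall applied to~\eqref{eq:v_est}, and the $L^1_z$-bound $\int |R_1(z, y, s)|\, dz \le C(1 + \epsilon s^{-1/2})$. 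The resulting time integral is at most $2\epsilon^{-1}\sqrt t + \pi$, which lies below $C(\epsilon^2 t)^{-1/2}$ for $t, \epsilon \in (0, 1]$; the $G$-contribution is bounded by $\int_0^t C(1 + \epsilon s^{-1/2})\, ds \le C$ using only $\|K_0(\cdot, z, t-s)\|_{L^1_x} \le C$. The main obstacle is thus isolating the Gaussian-hitting part of $\partial_y R_1$ so that IBP in $z$ converts it into a $\partial_z K_0$ carrying the correct smoothing factor $(\epsilon^2(t-s))^{-1/2}$.
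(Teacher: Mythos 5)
Your proposal is correct and follows the same overall route as the paper: write
$\partial_y K = \partial_y K_1 + \int_0^t e^{(t-s)Q}\partial_y R_2\,ds$, use mass
conservation (Lemma~\ref{lem:L1_bound}) for the Duhamel tail, and control
$\partial_y K_1$ by the integration-by-parts device from the proof of
Lemma~\ref{lem:R1_est_y}. The main value added is that you actually spell out the
step the paper elides. The paper merely asserts
$\|\partial_y K_1(\cdot,y,t)\|_{L^1}\le C(\epsilon^2 t)^{-1/2}$ ``from
Lemma~\ref{lem:R1_est_y},'' but that lemma bounds $\nabla_y R_1$ and
$\nabla_y R_2$, not $\nabla_y K_1$, and, as you observe, the naive bound
$\|\nabla_y R_1(\cdot,y,s)\|_{L^1}\lesssim \epsilon^{-1}s^{-1}$ is not
integrable at $s=0$. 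What is actually intended is to repeat the proof of
Lemma~\ref{lem:R1_est_y} for $\nabla_y R_2$ with the outer $R_1$ replaced by
$K_0$ (exactly as Step~3 of the proof of Lemma~\ref{lem:Rk_est} does for
$\partial_x^\alpha K_j$), and your IBP in $z$ via the chain-rule identity does
just that, moving the dangerous $z$-derivative onto $K_0$ and thereby producing
the integrable factor $(\epsilon^2(t-s))^{-1/2}$.

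One small inaccuracy, which does not affect the conclusion: the residual term
$G$ (the total derivative of $R_1(z,y,s)$ along the flow) does not satisfy
$|G|\le C|R_1|$ pointwise --- the Gaussian constants differ, and, more to the
point, \eqref{eq:dyR1'} gives $\int |G(z,y,s)|\,dz \le C s^{-1/2}$ rather than
$C(1+\epsilon s^{-1/2})$. Since $\int_0^t s^{-1/2}\,ds = 2\sqrt{t}\le C \le
C(\epsilon^2 t)^{-1/2}$ for $t,\epsilon\in(0,1]$, the bound on the
$G$-contribution is unchanged, but you should quote \eqref{eq:dyR1'} for
$\int|G|\,dz$ rather than comparing $G$ to $R_1$.
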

\begin{proof}
    Using Duhamel's formula, we see that 
    \begin{equation}
        \partial_y K(x, y, t) = \partial_y K_1(x, y, t) + \int_0^{t} e^{(t - s)Q} \partial_y R_2(x, y, s)\, ds
    \end{equation}
    From Lemma \ref{lem:R1_est_y}, we see that 
    \begin{equation}
        \|\partial_y K_1(\bullet, y, t)\|_{L^1} \le C(\epsilon^2 t)^{-\ha}
    \end{equation}
    and 
    \begin{equation}
        \|\partial_y R_2(\bullet, y, t)\|_{L^1} \le C(\epsilon^2 t)^{-\ha}.
    \end{equation}
    Therefore
    \begin{equation}
        \|\partial_y K(\bullet, y, t)\|_{L^1} \le C(\epsilon^2 t)^{-\ha} + \int_0^t (\epsilon^2 s)^{-\ha}\, ds \le C(\epsilon^2 t)^{-\ha} 
    \end{equation}
    as desired. 
\end{proof}
Let $W^{r, 1}(\R^n)$ denote the $L^1$-based order $r$ Sobolev space. We now show that derivatives of a solution to \eqref{eq:cauchy_problem} up to a $\mathcal O(1)$ amount of time independent of $\epsilon$ is controlled by derivatives of the initial data. 
\begin{proposition}\label{prop:short_time}
    Suppose $u(x, t) \in C^\infty(\R^n_x \times \R_t)$ satisfies \eqref{eq:cauchy_problem} and 
    \[|u(x, t)| \le B e^{\beta|x|^2}\]
    for some $B, \beta > 0$. Then for every $r \in \N$, there exists $0 < \tau_r\le 1$ independent of $0 < \epsilon \le 1$ such that for all $0 \le t \le \tau_r$, $u$ satisfies the estimate
    \begin{equation}
        \|u(t)\|_{W^{r, 1}} \le C_\alpha \|u_0\|_{W^{r, 1}}.
    \end{equation}
\end{proposition}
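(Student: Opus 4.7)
The plan is to represent $u(t) = e^{tQ} u_0$ using the parametrix $K_j$ of Section~\ref{sec:param} for $j$ chosen large depending on $r$, then to transfer $x$-derivatives of the kernel onto $u_0$ via a change of variables adapted to the flow $\varphi^t$. The Duhamel correction is handled by a bootstrap in $t$ to close the estimate.

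By Duhamel's formula applied to the identity $(\partial_t - Q) K_j = -R_{j+1}$, I would decompose
\[u(x, t) = \int_{\R^n} K_j(x, y, t) u_0(y)\, dy + w(x, t), \qquad w(x, t) := \int_0^t \bigl(e^{(t-s)Q} g_j(\bullet, s)\bigr)(x)\, ds,\]
where $g_j(z, s) := \int R_{j+1}(z, y, s) u_0(y)\, dy$. To handle the main term, change variables $y = \varphi^t(x - z)$, which has Jacobian $1$ by~\eqref{eq:divfree}, to rewrite
\[\int K_j(x, y, t) u_0(y)\, dy = \int \tilde K_j(x, z, t)\, u_0(\varphi^t(x - z))\, dz, \quad \tilde K_j(x, z, t) := K_j(x, \varphi^t(x - z), t).\]
In these variables the Gaussian argument of $\tilde K_j$ is $x$-independent: the leading order is $(\epsilon^2 t)^{-n/2} \det A(x)^{-1/2} \exp(-\langle z, A^{-1}(x) z\rangle / (4\epsilon^2 t))$, with $x$ entering only through $\det A(x)^{-1/2}$ and the coefficients $A^{-1}(x)$ of the exponent. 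Consequently $\partial_x^\alpha \tilde K_j$ equals $\tilde K_j$ times a polynomial in $z/(\epsilon \sqrt{t})$ of degree at most $2|\alpha|$ with coefficients bounded by~\eqref{eq:A_est}, which has uniform-in-$\epsilon$ $L^1_z$-norm by Gaussian integration. When $\partial_x^\alpha$ hits $u_0(\varphi^t(x - z))$, we obtain derivatives of $u_0$ times bounded factors coming from $d\varphi^t$ (by~\eqref{eq:v_est}). Applying Leibniz's rule, combining, and changing variables back in $x$ via Fubini yields
\[\Bigl\|\partial_x^\alpha \!\!\int K_j(\bullet, y, t) u_0(y)\, dy\Bigr\|_{L^1_x} \le C_\alpha \|u_0\|_{W^{|\alpha|, 1}}\]
uniformly in $\epsilon, t \in (0, 1]$.

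The same derivative-transfer argument applied to $R_{j+1}$, which shares the Gaussian structure of $K_0$ with improved time decay $\|R_{j+1}(\bullet, y, s)\|_{L^1_x} \le C s^{(j-1)/2}$ by Lemma~\ref{lem:Rk_est}, yields $\|g_j(\bullet, s)\|_{W^{r, 1}} \le C_r s^{(j-1)/2} \|u_0\|_{W^{r, 1}}$. Define $N(t) := \sup_{0 < s \le t} \|e^{s Q}\|_{W^{r, 1} \to W^{r, 1}}$, which is finite for each fixed $\epsilon > 0$ by Proposition~\ref{prop:WP} and standard parabolic regularity. Then
\[\|w(t)\|_{W^{r, 1}} \le \int_0^t N(t - s)\|g_j(s)\|_{W^{r, 1}}\, ds \le C_r N(t)\, t^{(j + 1)/2} \|u_0\|_{W^{r, 1}},\]
and combining with the main-term estimate gives $N(t) \le C_r + C_r t^{(j + 1)/2} N(t)$. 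Choosing $j \ge 1$ and $\tau_r > 0$ so that $C_r \tau_r^{(j+1)/2} \le 1/2$, a standard continuity bootstrap yields $N(\tau_r) \le 2C_r$, which is the desired $\epsilon$-uniform estimate.

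The main obstacle is the combinatorial bookkeeping when distributing $|\alpha|$ derivatives via Leibniz over $\tilde K_j$ and over $u_0(\varphi^t(x - z))$: each derivative of $\tilde K_j$ generates multiple terms from the prefactor, from the quadratic form coefficients $A^{-1}(x)$, and from the higher-order corrections $K_j - K_0$ in the parametrix. Verifying that every resulting term carries a uniform-in-$\epsilon$ $L^1_z$ bound reduces to Gaussian integral estimates of the same type already established in Section~\ref{sec:param}, together with uniform control on derivatives of the flow $\varphi^t$ for $t \in (0, 1]$ from~\eqref{eq:v_est}.
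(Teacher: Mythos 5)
Your proof takes a genuinely different route from the paper's. The paper differentiates the PDE, commutes, and applies Duhamel: the transport commutator $[\partial^\alpha, v\cdot\nabla]u$ is controlled by the $L^1$-contraction of Lemma~\ref{lem:L1_bound}, and the diffusion commutator $\epsilon^2\nabla\cdot[\partial^\alpha, A]\nabla u$ is handled by moving one derivative onto the kernel and invoking Lemma~\ref{lem:dyK_est}, after which a short-time absorption closes the induction on $r$. You instead represent $u(t)$ directly via the parametrix and transfer derivatives onto $u_0$ by a flow-adapted change of variables. Your central observation --- that in the variable $z = x - \varphi^{-t}(y)$ the Gaussian argument of $K_0$ is $x$-independent, so $x$-derivatives cost nothing in $\epsilon$ --- is correct and is the same mechanism behind the paper's estimates~\eqref{eq:dxR1'}, \eqref{eq:dyR1'}.

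There is, however, a gap in running the argument with $K_j$ for $j \ge 1$. Your claim that $\partial_x^\alpha \tilde K_j$ has a uniform-in-$\epsilon$ $L^1_z$ bound is demonstrated only for the leading piece $K_0(x, \varphi^t(x-z), t)$. For the corrections $K_j - K_0 = \sum_{k=1}^{j} \int_0^t\int K_0(x, w, t-s)\, R_k(w, y, s)\, dw\,ds$, substituting $y = \varphi^t(x-z)$ and differentiating in $x$ forces derivatives of $R_k$ in its second (i.e.\ $y$) slot, and the paper deliberately proves these only to first order for $R_1, R_2$ (see the remark preceding Lemma~\ref{lem:R1_est_y}); the available estimate~\eqref{eq:Kj_est} carries a $(\epsilon^2 t)^{-|\alpha|/2}$ factor that is precisely what you cannot afford. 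The same issue affects your bound $\|g_j(\cdot,s)\|_{W^{r,1}} \le C_r s^{(j-1)/2}\|u_0\|_{W^{r,1}}$ for $j \ge 1$, since it requires a transported derivative estimate on $R_{j+1}$. The fix is simply to take $j = 0$: the main term then uses $K_0$ alone, where your derivative-transfer argument is sound, and the remainder uses $R_1$, for which~\eqref{eq:dxR1'} gives $\|\partial_z^\alpha g_0(\cdot, s)\|_{L^1} \le C s^{-1/2}\|u_0\|_{W^{|\alpha|,1}}$ after the same change of variables. The bootstrap inequality becomes $N(t) \le C_r + C_r\, t^{1/2} N(t)$, which still closes for $t \le \tau_r$ small. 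With that modification your argument is a valid alternative to the paper's commutator proof.
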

\begin{proof}
For $r = 0$, Lemma \ref{lem:L1_bound} gives the desired $L^1$ estimate. We handle higher derivatives inductively. Assume that the lemma holds for all $\tilde r < r$. Differentiating \eqref{eq:cauchy_problem}, we see that 
\begin{equation}\label{eq:Ws_duhamel}
    (\partial_t - Q)(\partial_x^\alpha u) = h \nabla \cdot [\partial_x^\alpha, A(x)] \nabla u + [\partial_x^\alpha, v(x) \cdot \nabla]u, \qquad |\alpha| = r.
\end{equation}
Using Duhamel's formula, we can write
\begin{equation}
    \partial_x^\alpha u = e^{tQ}(\partial_x^\alpha u_0) + \int_0^t e^{(t - s)Q} (h \nabla \cdot [\partial_x^\alpha, A(x)] \nabla u(s) + [\partial_x^\alpha, v(x) \cdot \nabla] u(s))\, ds
\end{equation}
By assumption $v(x)$, satisfies~\ref{eq:v_est}, which means that $[\partial_x^\alpha, v(x) \cdot \nabla]$ is an order $r$ differential operator with uniformly bounded coefficients. Therefore, by Lemma~\ref{lem:L1_bound}, we have
\begin{equation}\label{eq:Ws_duhamel1}
    \left\|\int_0^t e^{(t - s) Q} [\partial_x^\alpha, v(x) \cdot \nabla] u(s)\, ds \right\|_{L^1} \le C t \max_{0 \le s \le t} \|u(s)\|_{W^{r, 1}}.
\end{equation}
The remaining term in the integrand of~\eqref{eq:Ws_duhamel} is handled using Lemma~\ref{lem:dyK_est}. Integrating by parts, we wee that 
\begin{align*}
    &\int_0^t e^{(t - s)Q} (\nabla \cdot [\partial_x^\alpha, A(x)] \nabla u(s))\, ds \\
    &= \int_0^t \int_{\R^n} K(x, y, t - s) (\nabla_y \cdot [\partial_y^\alpha, A(y)] \nabla_y u(y, s))\, dy ds \\
    &\le \int_0^t \int_{\R^n} \nabla_y K(x, y, t - s) \cdot ([\partial_y^\alpha, A(y)] \nabla_y u(y, s))\, dy ds
\end{align*}
Then it follows from Lemma~\ref{lem:dyK_est} and assumption~\eqref{eq:A_est} on $A$ that
\begin{equation}\label{eq:Ws_duhamel2}
    \left\|\int_0^t e^{(t - s)Q} (\epsilon^2 \nabla \cdot [\partial_x^\alpha, A(x)] \nabla u(s))\, ds \right\|_{L^1} \le C (\epsilon^2 t)^\ha \max_{0 \le s \le t} \|u\|_{W^{r, 1}}. 
\end{equation}
Combining~\eqref{eq:Ws_duhamel1} and~\eqref{eq:Ws_duhamel2} with~\eqref{eq:Ws_duhamel}, we see that 
\begin{equation*}
    \|\partial_x^\alpha u(t)\|_{L^1} \le \|\partial_x^\alpha u_0\|_{L^1} + C(t + (\epsilon^2 t)^\ha) \max_{0 \le s \le t} \|u(s)\|_{W^{r, 1}}.
\end{equation*}
By the induction hypothesis, we then have
\begin{equation}\label{eq:max}
    \|u(t)\|_{W^{r, 1}} \le \|u_0\|_{W^{r, 1}} + C(t + (\epsilon^2 t)^\ha) \max_{0 \le s \le t} \|u(s)\|_{W^{r, 1}}
\end{equation}
Taking $t$ sufficiently small independently of $\epsilon \in (0, 1]$, we obtain the desired estimate from~\eqref{eq:max}. 
\end{proof}

\subsubsection{Smoothing estimate}
In the Duhamel argument of the previous section, we are not able to push past time $t \simeq 1$. To obtain long time estimates, the other ingredient we need is that the $L^1$ norm of $\epsilon$-semiclassical derivatives of the solution at time $t \simeq 1$ is controlled by the $L^1$ norm at $t = 0$, $h = \epsilon^2$. 
\begin{proposition}\label{prop:smoothing}
    Suppose $u(x, t) \in C^\infty(\R^n_x \times \R_t)$ satisfies \eqref{eq:cauchy_problem} and 
    \[|u(x, t)| \le B e^{\beta|x|^2}\]
    for some $B, \beta > 0$. Then for each multiindex $\alpha \in \N_0^n$, there exists $0 < \tau_{\alpha} \le 1$ independent of $\epsilon$ such that 
    \begin{equation}
        \|(\epsilon \partial_x)^\alpha u(t)\|_{L^1} \le C \|u_0\|_{L^1}. 
    \end{equation}
    for all $t \ge \tau_\alpha$ and $0 < \epsilon \le 1$. 
\end{proposition}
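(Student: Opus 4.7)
The plan is to combine a short-time smoothing estimate for the fundamental solution $K(x, y, t)$ with a semigroup extension argument to handle arbitrarily large $t$. To begin, I would establish that
\begin{equation*}
    \sup_y \|\partial_x^\alpha K(\cdot, y, t)\|_{L^1_x} \leq C_\alpha (\epsilon^2 t)^{-|\alpha|/2}
\end{equation*}
for $t \in (0, \tau]$, with $\tau > 0$ independent of $\epsilon \in (0, 1]$. This is obtained from the full Levi parametrix series $K = K_0 + \sum_{k \geq 1} \int_0^t \int K_0(\cdot, z, t-s) R_k(z, \cdot, s)\, dz\, ds$. Step 3 of the proof of Lemma \ref{lem:Rk_est} shows that the $k$-th term admits the Gaussian pointwise bound $C_k (\epsilon^2 t)^{-(n+|\alpha|)/2} t^{k/2} \exp(-c|x - \varphi^{-t}(y)|^2/(\epsilon^2 t))$, and standard parabolic theory guarantees convergence of the series for $t \leq \tau$ independent of $\epsilon$. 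Integration in $x$ absorbs the $(\epsilon^2 t)^{-n/2}$ factor via the Gaussian integral.

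To extend to $t \geq \tau$, I would invoke the semigroup property $K(x, y, t) = \int K(x, z, \tau) K(z, y, t - \tau)\, dz$ and differentiate in $x$. Using positivity of $K$ (already invoked in Lemma \ref{lem:L1_bound}) and Fubini,
\begin{equation*}
    \|\partial_x^\alpha K(\cdot, y, t)\|_{L^1_x} \leq \int \|\partial_x^\alpha K(\cdot, z, \tau)\|_{L^1_x}\, K(z, y, t-\tau)\, dz.
\end{equation*}
The divergence-free assumption \eqref{eq:divfree} yields mass conservation $\int K(z, y, s)\, dz = 1$ (equivalently, $Q^* 1 = 0$), so the right side is bounded by the supremum from the previous paragraph. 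Applying Minkowski to $u(t, x) = \int K(x, y, t) u_0(y)\, dy$ and multiplying by $\epsilon^{|\alpha|}$ then gives $\|(\epsilon \partial_x)^\alpha u(t)\|_{L^1} \leq C_\alpha \tau^{-|\alpha|/2} \|u_0\|_{L^1}$ for $t \geq \tau_\alpha := \tau$.

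The main obstacle is the short-time kernel estimate. A naive Volterra iteration using only the $L^1$-norm bound $\|R_{j+1}(\cdot, y, s)\|_{L^1} \lesssim s^{(j+1)/2-1}$ fails for $|\alpha| \geq 2$, because differentiating $K$ inside the Duhamel integrand produces a non-integrable singularity $(\epsilon^2(t-s))^{-|\alpha|/2}$. One must instead rely on the pointwise Gaussian bound for each iterate, obtained via the change-of-variables and integration-by-parts technique in Step 2 of the proof of Lemma \ref{lem:Rk_est}, which converts $x$-derivatives of $K_0$ into cheaper $y$-derivatives of $R_{k-1}$ along the flow of $v$. It is precisely this dynamics-following mechanism that yields smoothing uniform in $\epsilon$; a standard Levi parametrix that ignores the transport term would not.
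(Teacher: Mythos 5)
Your extension to $t \ge \tau$ via the Chapman--Kolmogorov identity, positivity of $K$, mass conservation from \eqref{eq:divfree}, and Minkowski is sound, and it is essentially the (implicit) last step of the paper's proof. You also correctly diagnose that a naive Volterra iteration fails for $|\alpha| \ge 2$ and that the dynamics-following parametrix is what buys $\epsilon$-uniformity. The genuine divergence, and the genuine gap, is in how you obtain the short-time kernel estimate $\sup_y\|\partial_x^\alpha K(\cdot,y,t)\|_{L^1_x} \lesssim (\epsilon^2 t)^{-|\alpha|/2}$. You propose to sum the full Levi parametrix series, invoking ``standard parabolic theory'' for its convergence uniformly in $\epsilon$. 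But Lemma~\ref{lem:Rk_est} as proved gives bounds with constants $C_{j,\alpha}$ and exponential rates $c_{j,\alpha}$ whose dependence on $j$ is not controlled; the iterated integration-by-parts and Gaussian convolution steps can in principle degrade both (in particular $c_{j,\alpha}$ can shrink with $j$), and the $\epsilon$-uniformity is precisely the non-classical feature here, so one cannot simply cite classical Levi/Friedman/Eidelman summability without re-running the induction to exhibit Gamma-function gains with $\epsilon$-uniform constants. As written, the convergence of the series --- hence your short-time kernel bound --- is not established.

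The paper sidesteps this entirely. It never sums the infinite series: it truncates at a \emph{fixed finite} level $j \ge |\alpha|$ and writes the exact Duhamel identity
\begin{equation*}
    (\epsilon\partial_x)^\alpha K(x,y,t) = (\epsilon\partial_x)^\alpha K_j(x,y,t) - \int_0^t (\epsilon\partial_x)^\alpha e^{(t-s)Q} R_{j+1}(x,y,s)\,ds
\end{equation*}
with the \emph{actual} propagator $e^{(t-s)Q}$, not the parametrix. The ingredient your proposal does not invoke is Proposition~\ref{prop:short_time}: it transfers $(\epsilon\partial_x)^\alpha$ across $e^{(t-s)Q}$ at $L^1$ cost, giving $\|(\epsilon\partial_x)^\alpha e^{(t-s)Q} R_{j+1}(\cdot,y,s)\|_{L^1} \lesssim s^{(j-|\alpha|-1)/2}$, which is integrable on $[0,\tau_\alpha]$ once $j \ge |\alpha|$. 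That is what makes a finite truncation sufficient and removes any need to control the $j$-dependence of the constants in Lemma~\ref{lem:Rk_est}. If you want to keep the full-series approach, you would need to upgrade Lemma~\ref{lem:Rk_est} to a quantitative form with summable $C_{j,\alpha}$ and $c_{j,\alpha}$ bounded below uniformly in $j$; otherwise, replace the series summation by the paper's finite-truncation Duhamel argument.
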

Note that this captures a smoothing effect since higher regularity of the solution at time $t \simeq 1$ is controlled by just the mass of the initial data after $\mathcal O(1)$ amount of time. 
\begin{proof}
Let $K(x, y, t)$ denote the Schwartz kernel of $e^{tQ}$. By Duhamel's formula, we have
\begin{equation}\label{eq:differentiated_duhamel}
    (\epsilon \partial_x)^\alpha K(x, y, t) = (\epsilon \partial_x)^\alpha K_j(x, y, t) - \int_0^t (\epsilon \partial_x)^\alpha e^{(t - s)Q} R_{j + 1}(x, y, s)\, ds.
\end{equation}
By Lemma \ref{lem:Rk_est}, 
\begin{equation}
    \sup_{y \in \R^n} \|(\epsilon \partial_x)^\alpha K_j(\bullet, y, t)\| \le C t^{-\frac{|\alpha|}{2}}
\end{equation}
and 
\begin{equation}
    \sup_{y \in \R^n} \|(\epsilon \partial_x)^\alpha R_{j + 1}(\bullet, y, s)\| \le C s^{-\frac{|\alpha|}{2}} s^{\frac{j - 1}{2}}.
\end{equation}
By Lemma \ref{prop:short_time}, there exists $0 < \tau_\alpha < 1$ indepenent of $h$ such that for all $0 \le s \le t \le \tau_\alpha$, 
\begin{equation}
    \|(\epsilon \partial_x)^\alpha e^{(t - s)Q} R_{j + 1}(\bullet, y, s)\|_{L^1} \le C\sum_{\beta \le |\alpha|} \|(\epsilon \partial_x)^\beta R_{j + 1}(\bullet, y, s)\|_{L^1} \le C s^{\frac{ j - |\alpha| - 1}{2}}.
\end{equation}
Therefore, taking $j \ge |\alpha|$ so that the integral in \eqref{eq:differentiated_duhamel} converges, we find that 
\begin{equation}
    \sup_{y \in \R^n} \|(\epsilon \partial_x)^\alpha K(\bullet, y, t)\| \le C t^{-\frac{|\alpha|}{2}},
\end{equation}
which implies the lemma. 
\end{proof}

Combining Proposition \ref{prop:short_time} and \ref{prop:smoothing}, we have the following corollary stated in terms of $L^1$-based semiclassical Sobolev space $W^{r, 1}_\epsilon(\R^n)$, which is equivalent to $W^{r, 1}(\R^n)$ as a set, but is equipped with norm 
\begin{equation}
    \|u\|_{W^{r, 1}_\epsilon} := \sum_{|\alpha| \le r} \|(\epsilon \partial_x)^\alpha u\|_{L^1},
\end{equation}
where the derivatives are understood in the distributional sense. 
\begin{corollary}\label{cor:semiclassical}
    Suppose $u(x, t) \in C^\infty(\R^n_x \times \R_t)$ satisfies \eqref{eq:cauchy_problem} and 
    \[|u(x, t)| \le B e^{\beta|x|^2}\]
    for some $B, \beta > 0$. Then for all $r \ge 0$ and $0 < \epsilon \le 1$,
    \begin{equation}
        \|u(t)\|_{W^{r,1}_\epsilon} \le C_r \|u_0\|_{W^{r,1}_\epsilon}. 
    \end{equation}
    for all $t \ge 0$.
\end{corollary}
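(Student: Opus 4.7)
My plan is to split the time interval into a short-time regime $t \in [0, \tau_*]$ and a long-time regime $t \ge \tau_*$, where $\tau_* := \max_{|\alpha| \le r} \tau_\alpha$ with $\tau_\alpha$ the thresholds provided by Proposition~\ref{prop:smoothing}. In the long-time regime, applying Proposition~\ref{prop:smoothing} to each multi-index $\alpha$ with $|\alpha| \le r$ and summing gives $\|u(t)\|_{W^{r,1}_\epsilon} \le C \|u_0\|_{L^1} \le C \|u_0\|_{W^{r,1}_\epsilon}$ uniformly in $t \ge \tau_*$. The key feature is that the smoothing estimate avoids any growth in $t$ by tying the bound back to the conserved $L^1$ mass from Lemma~\ref{lem:L1_bound}.

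The short-time regime is more delicate because Proposition~\ref{prop:short_time} as stated controls the classical Sobolev norm, and passing from $W^{r,1}$ to $W^{r,1}_\epsilon$ costs a spurious factor of $\epsilon^{-r}$. I would therefore rerun the inductive Duhamel argument in the proof of Proposition~\ref{prop:short_time}, but with semiclassical derivatives $(\epsilon \partial)^\alpha$ in place of $\partial^\alpha$. Differentiating \eqref{eq:cauchy_problem} yields
\[ (\partial_t - Q)(\epsilon \partial)^\alpha u = \epsilon^2 \nabla \cdot [(\epsilon \partial)^\alpha, A] \nabla u + [(\epsilon \partial)^\alpha, v \cdot \nabla] u, \]
and the Leibniz expansion of each commutator produces terms involving $(\epsilon \partial)^\gamma A$ or $(\epsilon \partial)^\gamma v$ with $|\gamma| \ge 1$. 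The resulting factor $\epsilon^{|\gamma|} \le \epsilon$ exactly cancels the $\epsilon^{-1}$ introduced when rewriting $\nabla = \epsilon^{-1}(\epsilon \nabla)$. Using \eqref{eq:A_est} and \eqref{eq:v_est} in this form, the transport commutator becomes a uniformly bounded order-$r$ semiclassical differential operator, and the diffusion commutator takes the form $\nabla \cdot \mathcal X$ with $\|\mathcal X(s)\|_{L^1} \le C \epsilon^2 \|u(s)\|_{W^{r,1}_\epsilon}$.

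Inserting these into Duhamel's formula and controlling the transport term by Lemma~\ref{lem:L1_bound} and the diffusion term by Lemma~\ref{lem:dyK_est} (after integrating by parts in $y$) I would obtain the analog of \eqref{eq:max},
\[ \|u(t)\|_{W^{r,1}_\epsilon} \le \|u_0\|_{W^{r,1}_\epsilon} + C \bigl(t + (\epsilon^2 t)^{1/2}\bigr) \max_{0 \le s \le t} \|u(s)\|_{W^{r,1}_\epsilon}, \]
uniformly in $\epsilon \in (0, 1]$. Taking $\tau_r > 0$ small enough absorbs the error term and gives the short-time bound on $[0, \tau_r]$; iterating this estimate a fixed number of times via the semigroup property from Proposition~\ref{prop:WP} extends it to the full interval $[0, \tau_*]$ with an $\epsilon$-independent constant, which combined with the long-time bound completes the argument.

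The main obstacle in the whole proof is the semiclassical short-time step: one must verify that every Leibniz term in $[(\epsilon \partial)^\alpha, Q]$ defines a uniformly bounded operator on $W^{r,1}_\epsilon$. This hinges on the assumptions \eqref{eq:A_est} and \eqref{eq:v_est} providing precisely the $\epsilon^{|\gamma|}$-gain on higher derivatives needed to compensate the $\epsilon^{-1}$ generated by converting between classical and semiclassical gradients; without the fact that \eqref{eq:v_est} bounds $\partial^\gamma v$ only for $|\gamma| \ge 1$ (so that every commutator contribution carries at least one such derivative), no uniform bound would be possible.
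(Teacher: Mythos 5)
Your proposal is correct and matches what the paper intends when it says the corollary follows by ``combining Proposition~\ref{prop:short_time} and Proposition~\ref{prop:smoothing}'': a long-time regime covered by the smoothing estimate (tied back to $L^1$ mass conservation), a short-time regime covered by the Sobolev estimate, and iteration via the semigroup to bridge the two thresholds.

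One remark: your concern that ``passing from $W^{r,1}$ to $W^{r,1}_\epsilon$ costs a spurious factor of $\epsilon^{-r}$'' is overcautious, and the re-derivation of the Duhamel argument with semiclassical derivatives is not strictly needed. Since Proposition~\ref{prop:short_time} holds for \emph{every} order, one can apply it with $r=|\alpha|$ separately for each multi-index $\alpha$ with $|\alpha|\le r$, obtaining $\|\partial^\alpha u(t)\|_{L^1}\le C\|u_0\|_{W^{|\alpha|,1}}$. Multiplying by $\epsilon^{|\alpha|}$ and using $\epsilon\le 1$ gives
\begin{equation*}
\epsilon^{|\alpha|}\|\partial^\alpha u(t)\|_{L^1}\le C\sum_{|\beta|\le|\alpha|}\epsilon^{|\alpha|-|\beta|}\,\epsilon^{|\beta|}\|\partial^\beta u_0\|_{L^1}\le C\|u_0\|_{W^{r,1}_\epsilon},
\end{equation*}
since $|\alpha|\ge|\beta|$, and summing over $|\alpha|\le r$ gives the short-time semiclassical bound directly. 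The reason no loss occurs is exactly the one you identify at the end of your proposal (derivatives of the coefficients only lower the derivative order hitting $u$), but this fact is already baked into the conclusion of Proposition~\ref{prop:short_time} at order $|\alpha|$; you do not need to reopen its proof. Your version, where the bookkeeping is carried out inside the Duhamel iteration with $(\epsilon\partial)^\alpha$, is equally valid, just longer.
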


\subsection{An example} The Fokker--Planck equation can be solved explicitly when the Hamiltonian is quadratic and the Lindbladians are linear for Gaussian initial data-- see for instance \cite{HRRb}. This gives simple examples that we can compute, and we see that in general, we cannot do better than Lemma~\ref{prop:smoothing}. We consider an example in $\R^2_{x, \xi}$ given by 
\begin{equation}\label{eq:example_Q}
    Q = \epsilon^2 \Delta + x \partial_x - \xi \partial_\xi.
\end{equation}
This corresponds to the jump functions $\ell_1(x, \xi) = x$ and $\ell_2(x, \xi)= \xi$, and the Hamiltonian $p(x, \xi) = x \xi$. Consider the initial data
\begin{equation*}
    u_0(x) = \frac{1}{h} \exp \left(-\frac{x^2 + \xi^2}{h} \right).
\end{equation*}
The equation preserves Gaussians as well as the symmetry about the $x$ and $\xi$ axis. Therefore, the solution to \eqref{eq:cauchy_problem} must be of the form
\begin{equation*}
    u(x, \xi, t) = (a(t) b(t))^{-\ha}\exp\left(- \frac{x^2}{a(t)} - \frac{\xi^2}{b(t)} \right).
\end{equation*}
Then solving for $a(t)$ and $b(t)$, we find
\begin{equation*}
a(t) = (h - 2 \epsilon^2) e^{-2t} + 2 \epsilon^2, \qquad b(t) = (h + 2 \epsilon^2) e^{2t} - 2 \epsilon^2.
\end{equation*}
See Figure~\ref{fig:hyperbolic}.
\begin{figure}
    \centering
    \includegraphics{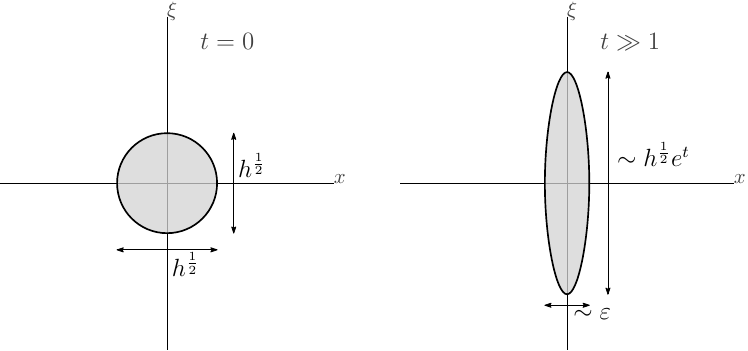}
    \caption{Fokker--Planck evolution of the standard coherent state centered at $0$ by \eqref{eq:example_Q} in the regime $h \le \epsilon^2 \le 1$.}
    \label{fig:hyperbolic}
\end{figure} 
In particular, observe that for $t \ge 1$, $u$ oscillates on scale $\epsilon$ in the $x$ direction if $h \le \epsilon^2 \le 1$ (which corresponds to the regime of Theorem \ref{thm:gaussian_case}). In terms of $L^1$ estimates, we see that
\begin{equation*}
    \|(\epsilon \partial_x)^k u(x, \xi, t)\|_{L^1_{x, \xi}(\R^2)} = C_k \epsilon^k \alpha(t)^{-\frac{k}{2}} \simeq C_k
\end{equation*}
From this example, we see that the smoothing estimate~\eqref{prop:smoothing} is optimal in $\epsilon$ and $t$.

\section{Classical--Quantum correspondence}
We return to the Lindblad equation. Recall that under our assumptions~\eqref{eq:FP_def} for $P$ and $L_j$, we can rewrite the Lindbladian as 
\begin{equation}
    \mathcal L A = \frac{i}{2h}[P, A] - \frac{\gamma}{h} \sum_{j = 1}^J [L_j, [L_j, A]]. 
\end{equation}
We now compute the asymptotic expansion of the involved operator compositions, from which the Fokker--Planck operator will appear. It follows from Lemma~\ref{lem:composition} that
\begin{equation*}
    [P,A] = \Op_h^\w(-ih H_p a + h^{3 - 3\rho}e) \quad \text{for some} \quad e \in S^{L^1}_\rho
\end{equation*}
We emphasize here that this commutator relation is specific to the Weyl quanitization; the order $h^{2 - 2\rho}$ terms cancel so that we end up with the better $h^{3 - 3 \rho}$ remainder. Applying the commutator identity twice, we also have
\begin{equation*}
    [L_j, [L_j, A]] = \Op^\w_h (-h^2 H_{\ell_j}^2 a) + h^{4 - 4\rho} \Op_h^\w(e) \quad \text{for some} \quad e \in S_\rho^{L^1}.
\end{equation*}
Therefore, we see that 
\begin{align}
    \mathcal L A &= \Op_h^\w \Big(H_p a + \frac{\gamma h}{2} \sum_{j = 1}^J H_{\ell_j}^2 a \Big) + h^{2 - 3 \rho} (1 + h^{1 - \rho} \gamma) \Op_h^\w(e) \nonumber \\
    &= \Op_h^\w(Qa) + h^{2 - 3 \rho} (1 + h^{1 - \rho} \gamma) \Op_h^\w(e) \label{eq:L_expansion}
\end{align}
for some $e \in S^{L^1}_\rho$, where $Q$ is the Fokker--Planck operator defined in~\eqref{eq:FP_def}. To ensure that the error term in~\eqref{eq:L_expansion} blow up with $h$ up to the strong coupling regime $\gamma = h^{-1}$, we see that it is reasonable to assume $0 \le \rho \le \ha$. See Remark after proof of Theorem~\ref{thm:general} for more details on the range of $\rho$.

\subsection{Trace norm estimate}
We first show that the classical evolution given by the Fokker--Planck equation agrees with the quantum evolution given by the Lindblad equation for a long time. 
\begin{theorem}\label{thm:general}
    Suppose $\mathcal L$ given in~\eqref{eq:full_lindbladian} satisfies~\eqref{eq:PL_assumptions} and~\eqref{eq:jump_elliptic}. Further assume that $h^{2 \rho - 1} \le \gamma \le h^{-1}$ and $0 \le \rho \le \ha$. If $A(t)$ satisfies 
    \begin{equation*}
        \partial_t A(t) = \mathcal L A(t), \quad A(0) = \Op_h^\w(a_0), \quad a_0 \in S^{L^1}_\rho,
    \end{equation*}
    then there exists $a(t)$ that satisfies
    \begin{equation*}
        \partial_t a(t) = Qa(t), \qquad a(0) = a_0
    \end{equation*}
    such that 
    \begin{equation}\label{eq:tr_est}
        \|A(t) - \Op_h^\w(a(t))\|_{\tr} \le \begin{cases} Ct(h^{2 - 3 \rho} + h^{3 - 4 \rho} \gamma) & 0 \le t \le 1 \\ C\big[h^{2 - 3 \rho} + h^{3 - 4 \rho} \gamma + t (h^{\ha} \gamma^{-\frac{3}{2}} + h \gamma^{-1}) \big] & t \ge 1 \end{cases}
    \end{equation}
\end{theorem}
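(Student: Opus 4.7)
The plan is to set up a Duhamel argument based on the asymptotic expansion \eqref{eq:L_expansion}, combining trace-class contractivity of the Lindblad semigroup with the $L^1$ estimates of \S\ref{sec:L1}. Let $B(t) := \Op_h^\w(a(t))$ be the quantized classical evolution. Since $\partial_t B = \Op_h^\w(Qa(t))$, equation \eqref{eq:L_expansion} yields
\begin{equation*}
    (\partial_t - \mathcal L)(A(t) - B(t)) = -R(t), \qquad R(t) := h^{2 - 3\rho}(1 + h^{1 - \rho}\gamma)\Op_h^\w(e(t)),
\end{equation*}
where $e(t) \in S^{L^1}_\rho$ is the Moyal remainder built from a bounded number of derivatives of $a(t)$, with coefficients controlled by $p$ and $\ell_j$ via \eqref{eq:PL_assumptions}. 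Matching initial conditions $A(0) = B(0) = \Op_h^\w(a_0)$ and applying Duhamel yields $A(t) - B(t) = -\int_0^t e^{(t - s)\mathcal L}R(s)\, ds$.

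Next I would use trace-class contractivity of $e^{t\mathcal L}$: with self-adjoint jump operators the semigroup is completely positive and trace-preserving, so by Jordan decomposition it contracts the trace norm of any Hermitian trace-class operator. Since $\mathcal L$ preserves Hermitian symbols, $e(t)$ is real, so $R(s)$ is Hermitian and $\|A(t) - B(t)\|_{\tr} \le \int_0^t \|R(s)\|_{\tr}\,ds$. Lemma \ref{lem:trace} then reduces estimating $\|R(s)\|_{\tr}$ to controlling $L^1$ norms of a bounded number of derivatives of $e(s)$, which via Lemma \ref{lem:composition} is controlled by $L^1$ norms of a bounded number of derivatives of $a(s)$.

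For the short-time regime $0 \le t \le 1$, I would apply Proposition \ref{prop:short_time} to the Fokker--Planck equation with $\epsilon = \sqrt{h\gamma/2}$ and iterate via the semigroup property to cover all $s \in [0,1]$, obtaining $\|a(s)\|_{W^{r,1}} \le C\|a_0\|_{W^{r,1}}$ for any fixed $r$. The hypothesis $a_0 \in S^{L^1}_\rho$ then transfers to $a(s) \in S^{L^1}_\rho$ uniformly in $s \in [0, 1]$, so Lemma \ref{lem:trace} yields $\|R(s)\|_{\tr} \le C(h^{2-3\rho} + h^{3-4\rho}\gamma)$. Integrating produces the first case of \eqref{eq:tr_est}.

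For the long-time regime $t \ge 1$, I would split $\int_0^t = \int_0^1 + \int_1^t$; the first piece reproduces the constant terms in \eqref{eq:tr_est} via the short-time argument. For $s \ge 1$, Corollary \ref{cor:semiclassical} provides the crucial smoothing $\|\partial^\alpha a(s)\|_{L^1} \le C h^n \epsilon^{-|\alpha|}$, so $a(s)$ effectively lies in $S^{L^1}_{\rho'}$ with $h^{\rho'} \simeq \epsilon = \sqrt{h\gamma/2}$. The assumption $h^{2\rho - 1} \le \gamma \le h^{-1}$ guarantees $0 \le \rho' \le \rho \le \ha$, so re-running the derivation of \eqref{eq:L_expansion} with $\rho$ replaced by $\rho'$ is valid and produces $\|R(s)\|_{\tr} \le C(h^{2-3\rho'} + h^{3-4\rho'}\gamma) = C(h^{1/2}\gamma^{-3/2} + h\gamma^{-1})$. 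Integrating over $[1,t]$ gives the linear-in-$t$ correction. The main obstacle is the bookkeeping of the Moyal remainder $e(t)$: one must verify that after cancellation of even-order terms in the Weyl commutator, the dominant contributions come from exactly three derivatives of $a$ in the Hamiltonian piece $h^{-1}[P,A]$ and four derivatives in the dissipative piece $\gamma h^{-1}[L_j,[L_j,A]]$, and that higher-order terms in the asymptotic expansion of Lemma \ref{lem:composition} carry enough additional factors of $h^{1-\rho'}$ to remain subdominant once smoothing is applied.
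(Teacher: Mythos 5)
Your proposal follows essentially the same route as the paper's proof: Duhamel against the Moyal-remainder identity \eqref{eq:L_expansion}, trace-norm contractivity of $e^{t\mathcal L}$, Lemma~\ref{lem:trace} to reduce to $L^1$ derivative estimates, Proposition~\ref{prop:short_time}/Corollary~\ref{cor:semiclassical} to keep $a(s)\in S^{L^1}_\rho$ uniformly for all $s\ge 0$, and the smoothing estimate Proposition~\ref{prop:smoothing} to upgrade $a(s)$ to $S^{L^1}_{\tilde\rho}$ with $h^{\tilde\rho}\simeq\epsilon=\sqrt{\gamma h/2}$ for $s\ge 1$, which after substitution yields $h^{2-3\tilde\rho}+h^{3-4\tilde\rho}\gamma\simeq h^{1/2}\gamma^{-3/2}+h\gamma^{-1}$. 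The only cosmetic difference is that the paper cites the trace-class contraction property of the CPTP semigroup directly from Davies rather than via the Jordan decomposition for Hermitian operators, and it attributes the $t\ge 1$ smoothing to Proposition~\ref{prop:smoothing} rather than Corollary~\ref{cor:semiclassical}, but neither affects the argument.
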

\begin{proof}
Consider initial data $a_0 \in S^{L^1}_\rho$, and let $a(t):= e^{tQ} a_0$. Let $\epsilon^2 := \frac{\gamma h}{2}$. It follows from Corollary~\ref{cor:semiclassical} that for $h^\rho \le \epsilon \le 1$, we have
\begin{align*}
    h^{-n} \|\partial^\alpha (e^{tQ} a_0)\|_{L^1} &\le C_\alpha h^{-n} \epsilon^{-|\alpha|} \sum_{|\beta| \le |\alpha|} \|(\epsilon \partial)^\alpha a_0\|_{L^1} \\
    &\le C_\alpha h^{-\rho |\alpha|} \sum_{k \le |\alpha|} \epsilon^{-(|\alpha| - k)} h^{\rho(|\alpha| - k)} \\
    &\le C_{\alpha} h^{-\rho |\alpha|}.
\end{align*}
Therefore, 
\begin{equation}
    a(t) \in S^{L^1}_\rho, \qquad t \ge 0,
\end{equation}
uniformly in $t$ and $h$. For $t \ge 1$, it follows from Proposition~\eqref{prop:smoothing} that
\begin{equation*}
    h^{-n} \|\partial^\alpha (e^{tQ} a_0)\|_{L^1} \le C_\alpha h^{-n} \epsilon^{-|\alpha|} \|a_0\|_{L^1} \le C_\alpha \epsilon^{-|\alpha|}, 
\end{equation*}
so 
\begin{equation}
    a(t) \in S^{L^1}_{\tilde \rho}  \quad \text{where} \quad \tilde \rho = \frac{\log \epsilon}{\log h}, \quad t \ge 1.
\end{equation}
Note that $\tilde \rho \le \rho$. By \eqref{eq:L_expansion}, we have 
\begin{equation}\label{eq:e1}
    \partial_t \Op_h^\w(a(t)) = \Op_h^\w(Qa(t)) = \mathcal L \Op_h^\w(a(t)) + \Op_h^\w(e_1(t)) 
\end{equation}
where the error $e_1$ satisfies
\begin{equation}\label{eq:e1_est}
e_1(t) \in 
    \begin{cases}
        h^{2 - 3 \rho} (1 + h^{1 - \rho} \gamma) S^{L_1}_\rho, & t \ge 0 \\
        h^{2 - 3 \tilde \rho} (1 + h^{1 - \tilde \rho} \gamma) S^{L_1}_{\tilde \rho}, & t \ge 1
    \end{cases}
\end{equation}
Using Duhamel's formula, we have
\begin{equation}
    A(t) = \Op_h^\w(a(t)) + \int_0^t e^{(t - s) \mathcal L} \Op_h^\w(e_1(t)),
\end{equation}
where $e^{(t - s) \mathcal L}$ is a well-defined completely positive contraction on the space of trace class operators -- see \cite{Davies77} and \cite{GZ24}. Therefore, since $0 \le \rho \le \ha$, we have 
\begin{equation}\label{eq:split}
\begin{aligned}
    \|A(t) - \Op_h^\w(a(t)) \|_{\tr} &\le \int_0^1 \|\Op_h^\w(e_1(s))\|_{\tr}\, ds + \int_1^t \|\Op_h^\w(e_1(s))\|_{\tr} \, ds \\
    &\le C h^{2 - 3 \rho}(1 + h^{1 - \rho}\gamma) \sum_{k \le 2n -1} h^{(\ha - \rho) k} \\
    &\hspace{2cm} + Ct h^{2 - 3 \tilde \rho}(1 + h^{1 - \tilde \rho}\gamma) \sum_{k \le 2n -1} h^{(\ha - \tilde \rho)k} \\
    &\le C\big[h^{2 - 3 \rho} + h^{3 - 4 \rho} \gamma + t (h^{\ha} \gamma^{-\frac{3}{2}} + h \gamma^{-1}) \big], \nonumber
\end{aligned}
\end{equation}
for $t \ge 1$ where we used Lemma~\eqref{lem:trace} to estimate the trace. For $0 \le t \le 1$, we simply have 
\begin{equation}
    \|A(t) - \Op_h^\w(a(t)) \|_{\tr} \le \int_0^t \|\Op_h^\w(e_1(s))\|_{\tr}\, ds \le C t h^{2 - 3 \rho}(1 + h^{1 - \rho}\gamma),
\end{equation}
which gives the first case in~\eqref{eq:tr_est}
\end{proof}
\begin{Remarks}
    1. Theorem~\ref{thm:gaussian_case} follows immediately from Theorem~\ref{thm:general} upon setting $\rho = \ha$ since the standard coherent states belong to symbol class $S^{L^1}_{1/2}$. The theorem also applies to mixtures of not-too-squeezed pure Gaussian states described in Remark 3 following Theorem~\ref{thm:gaussian_case}. 

    \noindent
    2. In the case $\gamma = 1$, we see that we can take $0 \le \rho \le \frac{2}{3}$ and \eqref{eq:L_expansion} would still make sense. The proof of Theorem~\ref{thm:general} works the same for such $\rho$, but due to the sum in \eqref{eq:split}, the bound in trace norm would be more complicated and depend on the dimension. We exclude this case from our presentation since $0 \le \rho \le \ha$ already contains the range of symbols that are mixtures of coherent states. 
\end{Remarks}

\subsection{Higher order correspondence}
In fact, we can construct classical observables that better approximate the quantum evolution by making higher order corrections. This is done by solving the Lindblad evolution asymptotically. 
\begin{theorem}\label{thm:asymptotic}
    Suppose $\mathcal L$ given in~\eqref{eq:full_lindbladian} satisfies~\eqref{eq:PL_assumptions} and~\eqref{eq:jump_elliptic}. Further assume that $h^{2 \rho - 1} \le \gamma \le h^{-1}$. If $A(t)$ satisfies 
    \begin{equation*}
        \partial_t A(t) = \mathcal L A(t), \quad A(0) = \Op_h^\w(a_0), \quad a_0 \in S^{L^1}_\rho,
    \end{equation*}
    then for each $N > 0$, there exists $a_N(t)$ such that 
    \begin{equation}
        \|A(t) - \Op_h^\w(a_N(t))\|_{\tr} \le t^{N} (h^{2 - 3 \rho}(1 + h^{1 - \rho} \gamma))^N.
    \end{equation}
\end{theorem}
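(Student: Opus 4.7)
Let $\mu := h^{2-3\rho}(1+h^{1-\rho}\gamma)$, so that \eqref{eq:L_expansion} reads $\mathcal L\Op_h^\w(b) = \Op_h^\w(Qb) + \mu\Op_h^\w(e)$ whenever $b \in S^{L^1}_\rho$ with $e \in S^{L^1}_\rho$ the corresponding remainder. The plan is to build $a_N(t)$ by a WKB-type iteration that cancels successive powers of $\mu$. Specifically, I would set
\begin{equation*}
    a_N(t) := \sum_{k=0}^{N-1} \mu^k b^{(k)}(t), \qquad b^{(0)}(t) := e^{tQ} a_0,
\end{equation*}
and define the higher-order corrections inductively via inhomogeneous Fokker--Planck equations. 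The $k=0$ term already recovers Theorem~\ref{thm:general}, and each further term should trade one power of $\mu$ for an extra factor of $t$ in the trace-norm remainder.

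Given $b^{(k-1)}(t) \in S^{L^1}_\rho$ (uniformly in $t \ge 0$), Lemma~\ref{lem:composition} and \eqref{eq:L_expansion} would yield
\begin{equation*}
    \mathcal L\, \Op_h^\w(b^{(k-1)}(t)) = \Op_h^\w(Q b^{(k-1)}(t)) + \mu\, \Op_h^\w(\tilde e^{(k-1)}(t)),
\end{equation*}
for some $\tilde e^{(k-1)}(t) \in S^{L^1}_\rho$ whose seminorms are controlled by finitely many seminorms of $b^{(k-1)}(t)$. I would then define
\begin{equation*}
    b^{(k)}(t) := -\int_0^t e^{(t-s)Q} \tilde e^{(k-1)}(s)\, ds,
\end{equation*}
which satisfies $(\partial_t - Q)b^{(k)} = -\tilde e^{(k-1)}$ and $b^{(k)}(0) = 0$. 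A telescoping computation then produces the residual identity
\begin{equation*}
    \partial_t \Op_h^\w(a_N(t)) - \mathcal L\, \Op_h^\w(a_N(t)) = -\mu^N \Op_h^\w(\tilde e^{(N-1)}(t)),
\end{equation*}
with $a_N(0) = a_0$.

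To close the argument I would propagate $S^{L^1}_\rho$ seminorm bounds with explicit $t$-dependence. Corollary~\ref{cor:semiclassical}, together with the $\epsilon$-semiclassical to $h^\rho$-semiclassical conversion already used in the proof of Theorem~\ref{thm:general}, gives that $e^{tQ}: S^{L^1}_\rho \to S^{L^1}_\rho$ is bounded uniformly in $t \ge 0$ (under $\gamma \ge h^{2\rho-1}$, i.e.\ $\epsilon \ge h^\rho$) in every seminorm. Iterating the Duhamel formula defining $b^{(k)}$ then produces, seminorm by seminorm, bounds of the form $\|b^{(k)}(t)\|_{S^{L^1}_\rho} \lesssim t^k/k!$ and similarly for $\tilde e^{(k-1)}$. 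Finally, Duhamel's formula for the Lindblad evolution, the contraction of $e^{s\mathcal L}$ on trace class (see~\cite{Davies77}), and Lemma~\ref{lem:trace} applied as in the proof of Theorem~\ref{thm:general}, together yield
\begin{equation*}
    \|A(t) - \Op_h^\w(a_N(t))\|_{\tr} \le \mu^N \int_0^t \|\Op_h^\w(\tilde e^{(N-1)}(s))\|_{\tr}\, ds \le C_N\, t^N \mu^N.
\end{equation*}

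The main obstacle is the seminorm bookkeeping at each step: because Lemma~\ref{lem:composition} trades smoothness for powers of $h^{1-\rho}$ in the Moyal expansion, the remainder $\tilde e^{(k)}$ depends on more seminorms of $b^{(k)}$ than the single one being estimated, so the implicit constants grow with $k$, and one must verify that this growth is absorbed by the factorial $k!$ coming from the Duhamel iteration. This in turn forces the induction to be closed simultaneously on the full scale of $S^{L^1}_\rho$ seminorms using Corollary~\ref{cor:semiclassical}, which is exactly what confines the argument to the regime $h^{2\rho-1} \le \gamma$ appearing in the hypothesis.
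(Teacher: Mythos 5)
Your construction is essentially the same iterative WKB/Duhamel scheme as the paper's proof (paper: define successive correctors by integrating the previous remainder against $e^{(t-s)Q}$, telescope, then close with Duhamel, the Davies contraction, and Lemma~\ref{lem:trace}). Your version is, if anything, more explicit about the $\mu^k$ bookkeeping.

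One concrete issue: with your conventions the telescope does not close as written. You take $\tilde e^{(k-1)}$ to be the remainder in $\mathcal L\Op_h^\w(b^{(k-1)}) = \Op_h^\w(Qb^{(k-1)}) + \mu\Op_h^\w(\tilde e^{(k-1)})$, which gives $\partial_t\Op_h^\w(b^{(0)}) - \mathcal L\Op_h^\w(b^{(0)}) = -\mu\Op_h^\w(\tilde e^{(0)})$. To cancel this at the next order you need $(\partial_t - Q)b^{(1)} = +\tilde e^{(0)}$, i.e.\ $b^{(k)} := +\int_0^t e^{(t-s)Q}\tilde e^{(k-1)}(s)\,ds$, not $-\int_0^t$. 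With your minus sign the residuals add instead of cancelling, and a direct computation gives $\partial_t\Op_h^\w(a_N) - \mathcal L\Op_h^\w(a_N) = -\sum_{k=1}^{N-1}2\mu^k\Op_h^\w(\tilde e^{(k-1)}) - \mu^N\Op_h^\w(\tilde e^{(N-1)})$, which is $O(\mu)$, not $O(\mu^N)$. (The paper writes the corrector as $-\int_0^t e^{(t-s)Q}e_N$, but there $e_N$ is defined through $\partial_t A_N - \mathcal L A_N = \Op_h^\w(e_N)$, which already carries the opposite sign, so the two are consistent.) Once this sign is flipped your argument goes through. One further remark: your worry about the constants growing in $k$ and needing to be beaten by $k!$ is not actually an obstruction here, since $N$ is fixed in the statement and the constants are allowed to depend on $N$; the factorial gain is a bonus, not a necessity.
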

\begin{proof}
    We claim that for each $N \ge 0$, there exits $a_N$ and $e_N$ such that $a_N(0) = a_0$, 
    \begin{equation}\label{eq:N}
        \partial_t A_{N} = \mathcal L A_{N} + \Op_h^\w(e_N(t)), \qquad A_N(t) := \Op_h^\w(a_N(t))
    \end{equation}
    and 
    \begin{equation}
        e_N \in t^{N - 1} (h^{2 - 3 \rho}(1 + h^{1 - \rho} \gamma))^N S^{L^1}_\rho.
    \end{equation}
    We proceed by induction. Note that $a_1(t) := e^{tQ} a_0$ achieves the base case with $e_1$ as defined in~\eqref{eq:e1}, which satisfies~\eqref{eq:e1_est}. Now suppose~\eqref{eq:N} holds for some $N > 0$. Then define
    \begin{equation*}
        a_{N + 1} := -\int_0^t e^{(t - s)Q} e_N(s)\, ds. 
    \end{equation*}
    Then it follows that 
    \begin{equation*}
        a_{N + 1} \in t^{N - 1} (h^{2 - 3 \rho}(1 + h^{1 - \rho} \gamma))^N S^{L^1}_\rho, 
    \end{equation*}
    and 
    \begin{equation*}
        \partial_t A_{N + 1} = \mathcal L A_{N + 1} + \Op_h^\w(e_{N + 1}(t))
    \end{equation*}
    for some 
    \begin{equation*}
        e_{N + 1} \in t^{N} (h^{2 - 3 \rho}(1 + h^{1 - \rho} \gamma))^{N + 1} S^{L^1}_\rho,
    \end{equation*}
    which completes the induction. The trace estimate then follows from~\eqref{eq:N} by
    \begin{align*}
        \|A(t) - A_N(t) \|_{\tr} &\le \int_0^t \|e^{(t - s) \mathcal L} \Op_h^\w(e_N(s))\|_{\tr}\, ds \\
        &\le \int_0^t \|\Op_h^\w(e_N(s))\|_{\tr}\, ds \\
        &\le t^{N} (h^{2 - 3 \rho}(1 + h^{1 - \rho} \gamma))^N,
    \end{align*}
    where we again used \cite{Davies77} to see that $e^{t \mathcal L}$ is a contraction on the space of trace class operators. 
\end{proof}
Note that Theorem~\ref{thm:asymptotic} can be slightly improved when we have strict inequality $\gamma > h^{2 \rho - 1}$ for large times by using the smoothing inequality of Proposition~\ref{prop:smoothing} rather than the semiclassical estimate in Corollary~\ref{cor:semiclassical}. However, this does not make a difference in the case $\rho = \ha$ and $\gamma = \mathrm{constant}$, so we only present the theorem in this form for simplicity.

\section*{Acknowledgments}
I would like to thank Felipe Hern\'andez for introducing me to the problem and for many helpful discussions about the methods in \cite{HRRb, HRRa}. I am also grateful to Maciej Zworski for his generous guidance and encouragment on this project as well as his insight to \cite{GZ24}. I would also like to thank Semyon Dyatlov for his support and many helpful suggestions on the project, and Daniel Ranard and Jess Riedel for helpful comments on early drafts of the paper. Partially support from Semyon Dyatlov's DMS-1749858 and DMS-2400090 is acknowledged.


\bibliographystyle{alpha}
\bibliography{FPL.bib}

\end{document}